\documentclass[a4paper, amsfonts, amssymb, amsmath, reprint, showkeys, nofootinbib, twoside, floatfix,superscriptaddress]{revtex4-1}
\usepackage[english]{babel}
\usepackage{braket}

\usepackage{amsthm}
\usepackage{mathtools}
\usepackage{physics}
\usepackage{xcolor}
\usepackage{graphicx}
\usepackage[left=23mm,right=13mm,top=35mm,columnsep=15pt]{geometry} 
\usepackage{adjustbox}
\usepackage{placeins}
\usepackage[T1]{fontenc}
\usepackage{lipsum}
\usepackage{csquotes}
\usepackage{amsmath,amssymb}
\usepackage{graphicx}
\usepackage{tabularx,booktabs}
\usepackage{tikz}
\usepackage{float}
\usetikzlibrary{quantikz2}
\usepackage[utf8]{inputenc}
\usepackage[colorinlistoftodos, color=green!40, prependcaption]{todonotes}
\usepackage[pdftex, pdftitle={Article}, pdfauthor={Author}]{hyperref}
\bibliographystyle{apsrev4-1}
\usepackage[ruled,lined]{algorithm2e}
\usepackage{algpseudocode}
\usepackage{hyperref}

\usepackage{caption}

\newtheorem{theorem}{Theorem}

\newtheorem{lemma}{Lemma}
\newtheorem{corollary}{Corollary}

\theoremstyle{definition}
\newtheorem{definition}[theorem]{Definition}

\usetikzlibrary{shapes.geometric, arrows}
\usetikzlibrary{automata,positioning}

\newcommand{\vect}[1]{\boldsymbol{#1}}
\newcommand{\ve}[1]{\boldsymbol{#1}}

\tikzstyle{startstop} = [rectangle, rounded corners, minimum width=3cm, minimum height=1cm, text centered, draw=black, fill=blue!20]
\tikzstyle{io} = [trapezium, trapezium left angle=70, trapezium right angle=110, minimum width=3cm, minimum height=1cm, text centered, draw=black, fill=blue!30]
\tikzstyle{process} = [rectangle, minimum width=3cm, minimum height=1cm, text centered, draw=black, fill=blue!20]
\tikzstyle{decision} = [diamond, minimum width=3cm, minimum height=1cm, text centered, draw=black, fill= green!30]
\tikzstyle{arrow} = [thick, ->, >=stealth]

\begin{document}

\title{Adiabatic quantum computing with parameterized quantum circuits}
\author{Ioannis Kolotouros}
    \email{i.kolotouros@sms.ed.ac.uk}
    \affiliation{University of Edinburgh, School of Informatics, EH8 9AB Edinburgh, United Kingdom}
\author{Ioannis Petrongonas}
    \email{ip2004@hw.ac.uk}
    \affiliation{Heriot-Watt University, EH14 4AS Edinburgh, United Kingdom}
\author{Miloš Prokop}
    \email{m.prokop@sms.ed.ac.uk}
    \affiliation{University of Edinburgh, School of Informatics, EH8 9AB Edinburgh, United Kingdom}
\author{Petros Wallden}
    \email{petros.wallden@ed.ac.uk}
    \affiliation{University of Edinburgh, School of Informatics, EH8 9AB Edinburgh, United Kingdom}
\date{\today}

\begin{abstract}

Adiabatic quantum computing is a universal model for quantum computing whose implementation using a gate-based quantum computer requires depths that are unreachable in the early fault-tolerant era.
To mitigate the limitations of near-term devices, a number of hybrid approaches have been pursued in which a parameterized quantum circuit prepares and measures quantum states and a classical optimization algorithm minimizes an objective function that encompasses the solution to the problem of interest. In this work, we propose a different approach starting by analyzing how a small perturbation of a Hamiltonian affects the parameters that minimize the energy within a family of parameterized quantum states. We derive a set of equations that allow us to compute the new minimum by solving a constrained linear system of equations that is obtained from measuring a series of observables on the unperturbed system. We then propose a discrete version of adiabatic quantum computing that can be implemented in a near-term device while at the same time is insensitive to the initialization of the parameters and to other limitations hindered in the optimization part of variational quantum algorithms. We compare our proposed algorithm with the Variational Quantum Eigensolver on two classical optimization problems, namely MaxCut and Number Partitioning, and on a quantum-spin configuration problem, the Transverse-Field Ising Chain model, and confirm that our approach demonstrates superior performance.

\end{abstract}

\maketitle

\section{Introduction}

We are currently transversing from the ``Noisy Intermediate-Scale Quantum devices'' (NISQ) era \cite{preskill2018quantum} to the early fault-tolerant era where small quantum computers of size $\mathcal{O}(100)$ are beginning to achieve useful results \cite{kim2023evidence, da2024demonstration}. Quantum computers promise to offer computational advantage for several tasks ranging from classical optimization \cite{farhi2014quantum}, quantum compiling \cite{khatri2019quantum}, machine learning \cite{cerezo2022challenges}, quantum chemistry \cite{kandala2017hardware} and simulation of quantum mechanical systems \cite{chen2024adaptive}. However, the practicality of these devices is still confined due to noise, the small number of qubits, and the limited connectivity.

A possible way to attempt to circumvent these constraints is to use Variational Quantum Algorithms (VQAs) \cite{bharti2022noisy, cerezo2021variational}, which can reduce the overall circuit depth. In VQAs, a quantum computer works in parallel with a classical computer in a continuous feedback loop in order to minimize an objective function that encompasses the solution to the problem of interest. For the optimization to be successful, the parameterized family of gates must contain the quantum state corresponding to the solution of the problem at hand (or at least a good approximation to it), and the classical optimization algorithm should be able to converge to a near-optimal solution and avoid sub-optimal local minima.

Despite the vast number of applications and research interest, the true performance of these algorithms and whether they can provide a valuable advantage over their classical counterparts is still an open question. A major reason is that the emerging objective function landscapes of VQAs are filled with a high number of local minima which make the algorithms (for specific choices of Ansatz families) $\mathsf{NP}$-hard to train \cite{bittel2021training}. Furthermore, even finding a parameterized family of gates that will contain the solution of the problem is hard. To add to the aforementioned problems, highly expressive Ansatz families that span a large fraction of the total Hilbert space require exponential resources to train due to \emph{barren plateaux} \cite{ragone2023unified, mcclean2018barren, cerezo2021cost}, as the gradients of the objective function vanish exponentially fast as the number of qubits increases. 

To further understand the geometry and trainability of the underlying non-convex landscapes and subsequently understand the limitations of these algorithms, a new field called Quantum Landscape Theory (QLT) was introduced ~\cite{arrasmith2022equivalence, lee2021progress, kim2022quantum, larocca2023theory, kiani2020learning, anschuetz2021critical, koczor2022quantum, wang2024can}. In~\cite{lee2021progress} the authors proved, for the Max-Cut problem, that unless $\mathsf{P}=\mathsf{NP}$, there does not exist an Ansatz family (consisting of commuting generators) with a number of parameters which is polynomial in the system size, and a convex landscape. Furthermore,~\emph{overparametrization} \cite{larocca2023theory, kiani2020learning} for specific Ansatz families can occur if the number of parameters is polynomial to the system size, and may lead to a computational phase transition where only high-quality minima exist in the objective function landscapes.
Moreover, \cite{larocca2022diagnosing} showed that analyzing the Dynamical Lie Algebra of the generators of the Ansatz (and thus the controllability of the system) could help in designing trainable Ansatz families. In this paper, we aim to contribute to the field of QLT by quantifying how much the global minima of these objective functions are shifted if we introduce a small perturbation in the initial Hamiltonian. This information, as we discuss later, has special importance as it can be used as the basis of a hybrid quantum/classical algorithm.

On the other end, conventional techniques for quantum computing with guaranteed performance, such as \emph{Adiabatic Quantum Computing} (AQC) \cite{albash2018adiabatic} require depth (or else coherent evolution for large time interval) that is unreachable for current quantum devices. In AQC the system is initialized in the ground state of an easy-to-compute ground state and the Hamiltonian is ``slowly''  varied until it becomes the Hamiltonian of interest. However, the system must be varied sufficiently slowly so that it remains in the instantaneous ground state throughout the evolution. Then, at the final time $t_f$, the system will be found in the ground state of the desired Hamiltonian. The time taken to complete the evolution quantifies the cost/resources required for a given computation. What determines the minimum time that suffices for the problem to be solved (i.e. how to ensure adiabaticity) is the spectral gap  \cite{amin2009consistency}. The total evolution time $t_f$ must scale as the inverse of the spectral gap, meaning that problems in which the gap becomes exponentially small \cite{van2001powerful} require exponentially large time and thus cannot be efficiently solved. Moreover, addressing the effect of noise in the adiabatic quantum evolution is also a complicated task.

In the past few years, the notions and ideas of AQC have tried to be incorporated into the NISQ literature \cite{keever2023towards, garcia2018addressing, harwood2022improving, chen2020demonstration}. In \cite{garcia2018addressing, harwood2022improving} the authors incorporated certain ideas from AQC into the Variational Quantum Eigensolver. Specifically, they defined a (discrete) parameterized Hamiltonian similar to the one used in AQC, and they started from a Hamiltonian with a known ground state that belongs to a certain ansatz family of parametrized states. Then, they iteratively tried to minimize the expectation value of the (parameterized) Hamiltonian by using the output parameters at every step as the starting point/initialization for the parameters of the next. Then they argued that the final output would be close to the optimal angles and the whole procedure could be used as a warm-starting method \cite{egger2021warm}. Relevant to this work, \cite{chen2020demonstration} proposed a method to variationally simulate the adiabatic evolution, and \cite{chandarana2022digitized} proposed to enhance the Quantum Approximate Optimization Algorithm (QAOA) with additional counterdiabatic driving terms which were shown to outperform the standard QAOA for the problems they investigated. Finally, \cite{hibat2021variational} used recurrent neural networks to simulate an annealing framework and showed that on average their method outperforms simulated annealing on several spin-glass problems.

While the approach of \cite{garcia2018addressing, harwood2022improving} seems to offer a potential advantage over traditional VQE methods (since bad initialization of VQE may lead to far-from-optimal minima) it still seems to under-perform in certain problems, even for small instances. Specifically, these methods rely on the classical optimization of a time-evolving objective function, and so the problems of local minima and false convergence still persist \footnote{Note that a potential challenge that this approach faces is that a small change in the Hamiltonian may result in the previous optimal point transforming into a \textit{saddle point} where any gradient-based optimization algorithm would fail.} \footnote{A Newton's type update that considers the Hessian may point towards a direction that increases the energy since it has both positive and negative eigenvalues.}. In addition, their method requires fine-tuning both the hyperparameters of the classical optimizer as well as the step, without quantifying how much the latter may affect the position of the global minimum. 

Inspired by AQC, we consider an optimization algorithm that is significantly different from VQAs and the work mentioned above. In our proposal, we start with a parameterized Hamiltonian (similar to AQC) and initialize the Ansatz family with angles that minimize the initial Hamiltonian. Then, at each iteration we \emph{calculate} the optimal angles \emph{analytically}, using the expectation values of certain observables at the previous step and solving a system of linear equations. Unlike the other works, we do \emph{not} run an energy minimization at each step, and we are thus less affected by the landscape of the cost space (e.g. barren plateaux). At the end of the algorithm, we output the angles that minimize the Hamiltonian of interest.

Our work draws connection to \emph{predictor-corrector} methods that are widely used in classical optimization \cite{simonetto2016class, simonetto2020time}. In these methods, the goal is to approximate the global minimum of a time-dependent cost function. A predictor (usually an Euler step) identifies a position near the global minimum and the corrector (a classical optimization algorithm) corrects the position of the predictor by minimizing (locally) the cost function of the problem. However, these techniques are used for convex optimization problems and so they cannot be directly applied in the VQAs framework.\\

\noindent\emph{Our Contributions:}
\begin{itemize}
    \item We study how small perturbations of a Hamiltonian affect the optimization landscape and specifically how much the minima are shifted under these perturbations. This enables us to follow the trajectory of a minimum in the cost landscape, as a (parameterized) Hamiltonian varies by solving a constrained linear system of equations. As such we obtain the ground state of the varying Hamiltonian without relying on a correct choice of hyperparameters at each perturbation.
    
    \item We formulate an algorithm to find the best approximation of the ground state of a Hamiltonian within a family of parameterized quantum states that: (i) can be applied in an early fault-tolerant device, (ii) is not sensitive to the initialization points and (iii) requires fixed calls to the quantum computer with theoretical guarantees on the performance.    

    \item We quantify the quantum resources needed for our algorithm, and show that finding the minimum of the perturbed Hamiltonian can be cast as a semidefinite program.
    
    \item We test our proposal on (simulated) small instances of both classical optimization problems such as Max-Cut and Number Partitioning, and on ``non-classical'' Hamiltonians which include non-diagonal terms, such as the random Transverse-Field Ising Chain. In both cases, the solution is achieved to a high accuracy within a few steps.

    \item We evaluate our proposed method by comparing it with the Variational Quantum Eigensolver on the same problems. 
    
\end{itemize}

\noindent\emph{Structure:} In Section \ref{sec:preliminaries} we give the essential background on Parameterized Quantum Circuits, Variational Quantum Algorithms and Adiabatic Quantum Computing. In Section \ref{sec:var_adiab} we present our main theoretical results and formulate an optimization algorithm that is inspired by Adiabatic Quantum Computing and can be used by NISQ/early-fault tolerant devices that produce high fidelity parameterized Quantum Circuits. In Section \ref{sec:param_theory} we give the proof of the main mathematical theorem that determines how much the global minima of the emerging objective function landscapes are shifted when a small perturbation in the Hamiltonian is introduced. In Section \ref{sec:solving_the_linear_system}, we provide an algorithm to solve the main problem incorporated in our approach, and quantify the number of quantum resources required to solve the problem. In Section \ref{sec:experiments} we test our approach by performing quantum simulations on the Max-Cut problem and on Transverse Ising Chains. In Section \ref{sec:comparisons} we compare our method with the Variational Quantum Eigensolver on two classical optimization problems and on a quantum spin-configuration problem. We conclude in Section \ref{sec:conclusion} with a general discussion of our results and future work.

\section{Preliminaries}
\label{sec:preliminaries}

We briefly introduce the notions of parameterized Quantum Circuits and their properties, Variational Quantum Algorithms and Adiabatic Quantum Computing that will be used throughout the paper.

\subsection{Parameterized Quantum Circuits}
\label{subsection:param_circuits}

Consider a quantum circuit composed of a series of parameterized and non-parameterized gates. These \emph{parameterized quantum circuits} (PQCs) can be described by a unitary operator $U(\boldsymbol{\theta})$, with $\boldsymbol{\theta} = (\theta_1,\ldots, \theta_M)=\sum_{i=1}^M\theta_i \ve{\hat{e}_i}$ being the parameter vector \footnote{In order for the Ansatz to be trainable, the number of parameters $M$ must scale as $M=\mathcal{O}(poly(n))$, where $n$ is the system size.} and $\hat{e}_i$ being the unit vector pointing in the $i$th direction. The type/number of gates is usually referred to as the \emph{Ansatz family}. If $K$ is the total number of gates of the PQC, then the Ansatz family can be written in its more generic form as
\begin{equation}
    U(\ve{\theta}) = \prod_{k=K}^1 e^{-i\theta_k g_k},
\label{eq:ansatz_family}
\end{equation}
where $g_k$ are the generators ($g_k^\dagger = g_k)$ corresponding to every unitary of the PQC. We use the convention that the gate parameterized with angle $\theta_1$ acts first. In most cases only a small fraction of the $K$ total parameters are tunable, while the other $K-M$ angles correspond to fixed non-parameterized gates (usually two-qubit gates introducing entanglement). This type of Ansatz family corresponds to many PQCs that appear in the literature, such as the hardware-efficient Ansatz \cite{kandala2017hardware}, the quantum alternate operator Ansatz \cite{hadfield2019quantum}, the variational Hamiltonian Ansatz \cite{wecker2015progress} or the quantum optimal control Ansatz \cite{choquette2021quantum}. The dimension of the parameter space is $M$, and depends on the PQC considered, but typically depends on the number of qubits $n$ that are used, which in its turn depends on the size of the instance of the problem one attempts to solve. For example typical hardware efficient Ans\"atze have $M=O(n)$ to $O(n^2)$, while QAOA may perform well with $M=O(\log n)$.

The classical subroutines in Variational Quantum Algorithms (see Subsection \ref{subsection_var_alg}) require knowing the derivatives of the expectation value of a Hamiltonian with accurate precision. Specifically, consider the expectation value/energy of a Hamiltonian $H$ on the state $\ket{\psi(\vect{\theta})} = U(\boldsymbol{\theta})\ket{0}$,
\begin{equation}
    F(\boldsymbol{\theta}) = \bra{0}U^\dagger (\boldsymbol{\theta}) H U(\boldsymbol{\theta})\ket{0}.
\end{equation}
Let a variable $\theta_j$ correspond to a parameterized gate in the circuit with a generator $g_j$, i.e. $V(\theta_j) = e^{-i\theta_j g_j}$. The derivative of the expectation value can be approximated with high precision using the \emph{finite-differences} approximation as
\begin{equation}
    \frac{\partial F}{\partial \theta_j} \approx \frac{F(\boldsymbol{\theta} + h \hat{e}_j) - F(\boldsymbol{\theta} - h \hat{e}_j)}{2h},
\label{eq:finite_differences}
\end{equation}
with Eq \eqref{eq:finite_differences} becoming exact as $h\rightarrow 0$. The Hessian $\textbf{H}$ of the energy $F(\vect{\theta})$ is defined via its matrix elements, namely $\mathbf{H}_{jk}=\frac{\partial^2 }{\partial \theta_j\partial \theta_k}F(\vect{\theta})$. The finite difference approximation can also be applied for higher-order derivatives, e.g. for the second order derivatives of the Hessian, we have
\begin{equation}
\begin{gathered}
    \frac{\partial^2 F}{\partial \theta_j \partial \theta_k} \approx \frac{1}{4h^2}\Big[F(\boldsymbol{\theta} + h \hat{e}_j + h \hat{e}_k) - F(\boldsymbol{\theta} - h \hat{e}_j + h \hat{e}_k)\\
    -F(\boldsymbol{\theta} + h \hat{e}_j -h \hat{e}_k) + F(\boldsymbol{\theta} - h \hat{e}_j -h \hat{e}_k)\Big].
\end{gathered}
\end{equation}

Recently, methods that allow for the exact calculation of the derivatives with the same computational overhead as the finite-differences approximation have been developed. If the generator $g_j$ has two distinct eigenvalues $\pm r$, then the \emph{parameter-shift rules} \cite{romero2018strategies,schuld2019evaluating, mari2020estimating} state that the derivative $\frac{\partial F}{\partial \theta_j}$ can be calculated exactly as
\begin{equation}
    \frac{\partial F}{\partial \theta_j} = r\left[F\left(\boldsymbol{\theta}+\frac{\pi}{4r}\boldsymbol{\hat{e}_j}\right) - F\left(\boldsymbol{\theta}-\frac{\pi}{4r}\boldsymbol{\hat{e}_j}\right) \right].
\label{eq:par_shift}
\end{equation}
 As a result, shifting the parameters in the appropriate direction allows for exact calculation of the derivatives compared to finite-difference methods. Moreover, the parameter-shift rules can be used for higher-order derivatives as well.  Since its elements $\textbf{H}_{jk}$ for the two parameters $\theta_j$, $\theta_k$ correspond to two parameterized gates with generators having eigenvalues $\pm r$ it can be calculated as
\begin{equation}
\begin{gathered}
    \textbf{H}_{jk} = r^2\Big[F\left(\boldsymbol{\theta}+\frac{\pi}{4r}(\boldsymbol{\hat{e}_j} + \boldsymbol{\hat{e}_k})\right) - F\left(\boldsymbol{\theta}+\frac{\pi}{4r}(\boldsymbol{\hat{e}_j} -\boldsymbol{\hat{e}_k})\right)\\
    -F\left(\boldsymbol{\theta}+\frac{\pi}{4r}(-\boldsymbol{\hat{e}_j} +\boldsymbol{\hat{e}_k})\right) + F\left(\boldsymbol{\theta}-\frac{\pi}{4r}(\boldsymbol{\hat{e}_j} +\boldsymbol{\hat{e}_k})\right) \Big].
\label{eq:par-hessian}
\end{gathered}
\end{equation}

\subsection{Variational Quantum Algorithms}
\label{subsection_var_alg}

\newsavebox{\genericfilt}
\savebox{\genericfilt}{%
\begin{tikzpicture}
        \node[scale=0.75] {
            \begin{quantikz}
            \lstick{$\ket{0}^{\otimes n}$} &\gate{V(\theta_1)}\qwbundle[alternate]{} &\gate{V(\theta_2)}\qwbundle[alternate]{} &\ldots & &\gate{V(\theta_M)}\qwbundle[alternate]{}  &\meter{}\qwbundle[alternate]{}
            \end{quantikz}
        };
    \end{tikzpicture}
}
\begin{figure*}
\includegraphics[width=17.5cm,height=10cm,keepaspectratio]{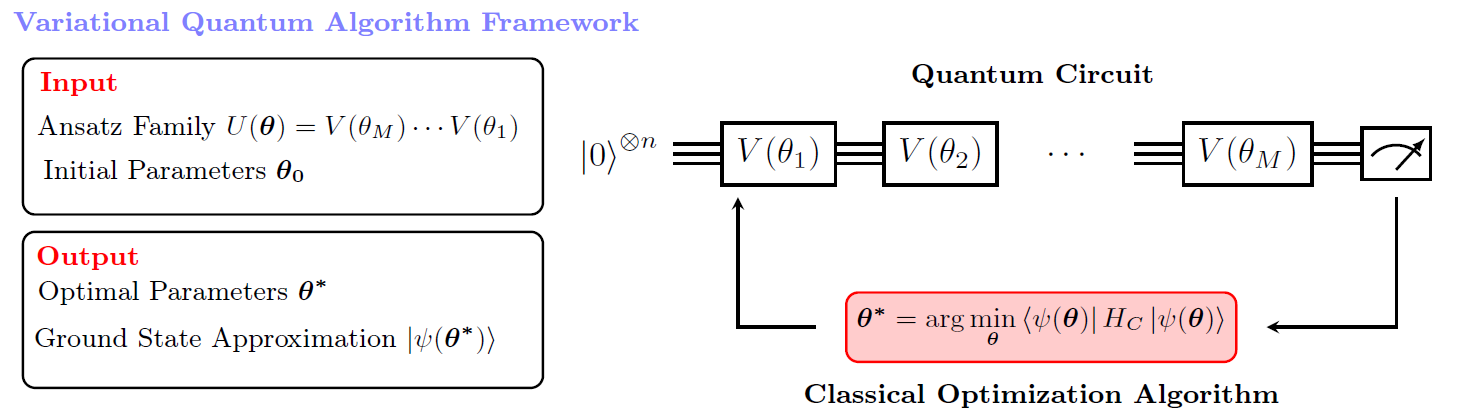}
\caption{General Variational Quantum Algorithm Framework. In the first step, the parameterized family of gates and the initial parameters are chosen. Then, the classical optimization algorithm iteratively updates the parameters towards the direction of the ground state energy. When the classical optimizer converges to a minimum, the algorithm stops and an approximation to the ground state is given at the output.}
\label{fig:quantum_circuit}
\end{figure*}

Variational Quantum Computing is a hybrid framework in which a quantum computer works in parallel with a classical computer in order to solve a given task. It is a heuristic algorithm and thus we do not have guarantees that the solution it outputs is optimal. The general framework is outlined in Figure \ref{fig:quantum_circuit}.

The first step is to express the problem at hand as an interacting qubit Hamiltonian $H_C$ whose ground state will correspond to the optimal solution of the problem. This is usually done by decomposing the Hamiltonian $H_C$ into $L=\mathcal{O}(poly(n))$ Pauli strings (with $n$ being the problem size), namely
\begin{equation}
H_C = \sum_{l=1}^L c_l P_l,
\end{equation}
where $c_l$ is the real coefficient corresponding to the $l$-th Pauli string $P_l$.

Next, an \emph{Ansatz family} $U(\boldsymbol{\theta})$ parameterized by a vector $\boldsymbol{\theta} = (\theta_1, \theta_2, \ldots, \theta_M)$ with $M=\mathcal{O}(poly(n))$ parameters is chosen so that the ground state (or at least a good approximation of it) is included within the Ansatz. The parameterized family of gates then acts on an easily-prepared reference state $\ket{\phi}$, usually taken to be the $\ket{0}^{\otimes n}$, and produces the state $\ket{\psi(\boldsymbol{\theta})}$, i.e. $U(\boldsymbol{\theta})\ket{0}^{\otimes n} = \ket{\psi(\boldsymbol{\theta})}$. The success of a variational quantum algorithm depends strongly on the choice of the parameterized gates, and therefore the design of highly expressible and easily trainable Ansatz families has received significant interest \cite{grimsley2019adaptive, kandala2017hardware, hadfield2019quantum, choquette2021quantum, zhu2022adaptive}. 

In the third step, once the parameterized state $\ket{\psi(\boldsymbol{\theta})}$ has been prepared, the circuit is executed and the output state is measured. The quantum computer works in parallel with the classical computer in order to minimize an objective function (whose minimum corresponds to the solution of the problem), usually chosen to be the expectation value of the output state. However, in many cases different objective functions lead to improved performance \cite{barkoutsos2020improving, kolotouros2022evolving, li2020quantum}. When the classical optimization algorithm has converged, the variational quantum algorithm terminates and outputs the optimal angles $\boldsymbol{\theta^*}$ such that, when VQA is successful,
\begin{equation}
\boldsymbol{\theta^*} = \arg\underset{\boldsymbol{\theta}}{\min}\bra{\psi(\boldsymbol{\theta})}H_C\ket{\psi(\boldsymbol{\theta})}.
\end{equation}

\subsection{Adiabatic Quantum Computing}

Adiabatic Quantum Computing (AQC) seeks to evolve a state under a time-dependent Hamiltonian $H(t)$. More specifically, a system of qubits is initialized into an easy-to-prepare ground state of a Hamiltonian $H_0$. Then, the system is allowed to interact through the Hamiltonian
\begin{equation}
    H(t) = \left(1-\frac{t}{t_f}\right)H_0 + \frac{t}{t_f}H_1, \text{ $t\in [0,t_f]$}.
\label{eq:adiabatic_hamiltonian}
\end{equation}
If the Hamiltonian is gapped and the evolution is ``slow enough'' so that the system of qubits always remains in the instantaneous ground state throughout the evolution, then at the final time $t_f$ the system will be in the ground state of the desired Hamiltonian $H_1$. There are exact bounds to restrict the time $t_f$ in order to ensure adiabaticity \cite{jansen2007bounds, elgart2012note} and is correlated to the spectral gap, i.e. the energy difference between the ground state and the first excited state. Specifically, exponentially (with the system size) small gaps require exponentially large evolution time the Hamiltonian \eqref{eq:adiabatic_hamiltonian}.

The building block (and inspiration) of AQC is the \emph{Adiabatic Theorem} which states that an evolving quantum system under a time-dependent Hamiltonian $H(t)$ will remain in the instantaneous ground state as long as the system never ``receives'' enough energy to make a transition to the instantaneous first excited state. The sufficient energy to make a transition is bounded by the spectral gap. As a result, all AQC-inspired algorithms must have a runtime that is dependent on the minimum spectral gap of the evolution. There are problems, such as solving linear systems of equations or search-engine problems, where bounding the spectral gap is possible. In \cite{costa2022optimal}, \emph{Costa et al.} were able to show that a discrete version of AQC can achieve an asymptotically optimal scaling for solving linear systems while in \cite{garnerone2012adiabatic} the authors were able to provide a polylogarithmic (to the system size) AQC algorithm for the PageRank problem.

As proven in \cite{van2001powerful}, the total unitary evolution $U(t_f, 0) = e^{-i\int_0^{t_f}H(t)dt}$ required to interpolate between the two Hamiltonians can be approximated by $M = \mathcal{O}(\text{poly}(n)t_f)$ discrete unitary evolutions where $n$ is the system size. Clearly, the large depth that is required to approximate the adiabatic evolution makes it intractable for near-term devices as the system size increases. Thus it would be useful to examine whether we can use a trade-off between trainable parameters and circuit depth.

\subsection{Notation}
 At this point, it is important to clarify our notation, and specifically to highlight the different Hamiltonians used throughout the manuscript.

\begin{itemize}
    \item $H_0$: The initial Hamiltonian at time $t=0$, for which we know the parameters that produce its ground state. For all our experiments, we choose $H_0 = -\sum_{j} \sigma_j^x$ with a ground state $\ket{\psi_0} = \ket{+}^{\otimes n}$.

    \item $H_1$: This is the target Hamiltonian that the user aims to find its ground state. The goal is to identify the parameters of the parameterized quantum circuit that generates the ground state of $H_1$.

    \item $V$: This Hamiltonian corresponds to the perturbation that we add in each iteration. In the case of Algorithm \ref{alg:param_algorithm}, we choose the perturbation Hamiltonian to be $V \equiv H_1 - H_0$.

    \item $H_s$: This is the starting Hamiltonian \emph{at each step of the algorithm} (or else the unperturbed Hamiltonian). It is equivalent with $H_0$ at $t=0$, but it could be any intermediate Hamiltonian for which we have found its ground state (in the previous step) using Theorem \ref{th:main}.

    \item $H_{\lambda}$: This is the perturbed Hamiltonian on all intermediate steps (or else the Hamiltonian that we seek its ground state on intermediate steps). That is, on every iteration, we start with the ground state of $H_s$ and we use Theorem \ref{th:main} to find the ground state of $H_\lambda = H_s + \lambda V$. $H_\lambda$ is also equivalent to $H_1$ at time $t=t_f$ when the algorithm terminates.
\end{itemize}

\section{Adiabatic Quantum Computing with Parameterized Quantum Circuits}
\label{sec:var_adiab}

In this section we will present our two main results: A theorem that quantifies how small changes in the Hamiltonian affect the position of the global minima, and a hybrid quantum-classical algorithm, which we call AQC-PQC, that utilizes the aforementioned theorem and ideas from AQC to find the best approximation of the ground state of a Hamiltonian within a family of quantum states obtained using a parameterized quantum circuit. A comparison with other approaches is given at the end of the section while the proof of the theorem follows in Section \ref{sec:param_theory}. 

Consider a Hamiltonian $H_s$, whose ground state, just like in AQC, is known. Let also a parameterized quantum circuit $U(\boldsymbol{\theta})$, prepared with parameters $\boldsymbol{\theta^*}$ that generate the ground state $\ket{\psi(\boldsymbol{\theta^*})}$ of $H_s$. The first question that we want to answer is: ``If the Hamiltonian $H_s$ is deformed by a small amount $\lambda V$ ($H_\lambda = H_s + \lambda V$), what is the shift vector $\boldsymbol{\epsilon}$ that will translate the system from the initial ground state $\ket{\psi(\boldsymbol{\theta^*})}$ of $H_s$ to the ground state $\ket{\psi(\boldsymbol{\theta^*} + \boldsymbol{\epsilon})}$ of the slightly deformed Hamiltonian $H_\lambda$?". The answer to this question is given in Theorem \ref{th:main}.

\begin{figure}
\includegraphics[scale=0.65]{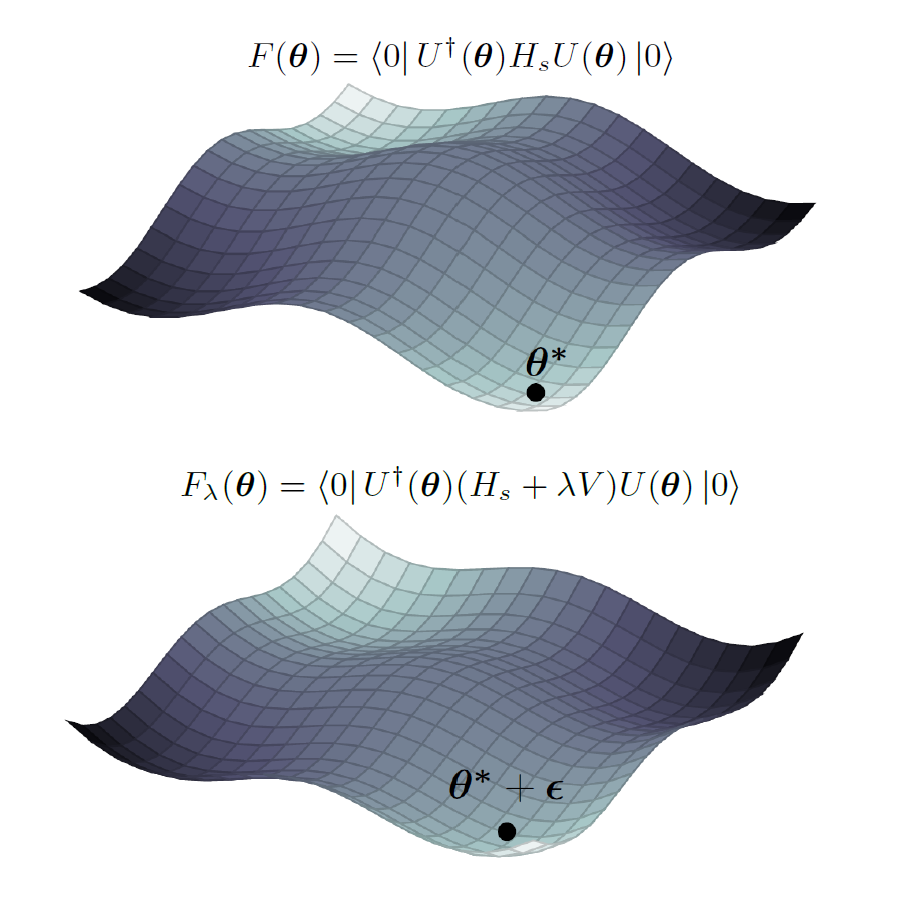}
    \caption{Variation of the loss function landscape for a small variation in the Hamiltonian. The global minimum shifts from the optimal point $\boldsymbol{\theta^*}$ of the Hamiltonian $H_s$ to the optimal point $\boldsymbol{\theta^*}+\boldsymbol{\epsilon}$} of the Hamiltonian $H_s + \lambda V$.
\end{figure}

\begin{theorem}
\label{th:main}
Consider a \emph{parameterized quantum circuit} defined via the unitaries $U(\boldsymbol{\theta})$, and the corresponding states $\ket{\psi(\boldsymbol{\theta})}=U(\boldsymbol{\theta})\ket{0}$. We are given a Hamiltonian $H_s$ and the angles $\boldsymbol{\theta^*}$ that minimize its energy, i.e. $\boldsymbol{\theta^*} = \arg \min_{\boldsymbol{\theta}} \bra{\psi(\boldsymbol{\theta})}H_s\ket{\psi(\boldsymbol{\theta})}$. If we perturb the Hamiltonian $H_s$ by a small amount $\lambda V$ with $\lambda \ll 1$, and $\Vert H_s\Vert\approx\Vert V\Vert $, then there exists a shift vector $\ve{\epsilon}$ such that, with high probability, the state $\ket{\psi(\boldsymbol{\theta^*}+\boldsymbol{\epsilon})}$ is the ground state of the perturbed Hamiltonian  $H_\lambda = H_s + \lambda V$  and the shift vector is the solution of the following mathematical problem: 
\begin{equation}
\label{eq:main_problem}
\begin{gathered}
        \text{min } \norm{\boldsymbol{\epsilon}}\\
        \text{subject}\text{ to: } A\boldsymbol{\epsilon} + \ve{Q} = 0,\\
        \textbf{H}^\lambda\big{|}_{\boldsymbol{\theta^*}+\boldsymbol{\epsilon}} \succcurlyeq 0,
\end{gathered}
\end{equation}
where $\ve{H}^\lambda\big{|}_{\ve{\theta^*+\epsilon}}$ is the Hessian evaluated at the shifted point, $\ve{Q}=\sum_i Q_i \ve{\hat{e}_i}$ is a vector and $A$ is a matrix that are defined via their elements
\begin{eqnarray}
\label{eq:Q_A}
Q_i  &=& \lambda  \frac{\partial}{\partial\theta_i} \left(\bra{\psi(\ve{\theta})} V\ket{\psi(\ve{\theta})}\right) \bigg{|}_{\ve{\theta^*}}\\
A_{ij} &=&  \frac{\partial^2}{\partial\theta_i\partial\theta_j} \left(\bra{\psi(\ve{\theta})} H_\lambda\ket{\psi(\ve{\theta})}\right)\bigg{|}_{\ve{\theta^*}}.\nonumber 
\end{eqnarray}
\end{theorem}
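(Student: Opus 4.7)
The plan is to derive the characterization by writing down the first-order optimality condition for the perturbed cost function
\[
F_\lambda(\boldsymbol{\theta}) \;=\; \bra{\psi(\boldsymbol{\theta})}H_\lambda\ket{\psi(\boldsymbol{\theta})}
\]
at the new (unknown) minimizer $\boldsymbol{\theta^*}+\boldsymbol{\epsilon}$ and then Taylor-expanding it around the known minimizer $\boldsymbol{\theta^*}$ of the unperturbed energy $F_s(\boldsymbol{\theta}) = \bra{\psi(\boldsymbol{\theta})}H_s\ket{\psi(\boldsymbol{\theta})}$. Since $\lambda\ll 1$ and $\|H_s\|\approx\|V\|$, one expects the new minimizer to sit within a displacement of size $O(\lambda)$ of the old one, which justifies truncating the Taylor expansion of $\nabla F_\lambda$ at first order:
\[
0 \;=\; \nabla F_\lambda(\boldsymbol{\theta^*}+\boldsymbol{\epsilon}) \;=\; \nabla F_\lambda(\boldsymbol{\theta^*}) \;+\; \mathbf{H}^\lambda\big|_{\boldsymbol{\theta^*}}\,\boldsymbol{\epsilon} \;+\; O(\|\boldsymbol{\epsilon}\|^2).
\]

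Next, I would split the gradient at $\boldsymbol{\theta^*}$ into a contribution from $H_s$ and one from $\lambda V$. The first vanishes because $\boldsymbol{\theta^*}$ minimizes $F_s$; the second reproduces exactly the vector $\boldsymbol{Q}$ defined in Eq.~\eqref{eq:Q_A}. Dropping the quadratic remainder then yields the linear system $A\boldsymbol{\epsilon}+\boldsymbol{Q}=0$, with $A=\mathbf{H}^\lambda|_{\boldsymbol{\theta^*}}$ matching the definition in the statement of the theorem (and agreeing, to leading order in $\lambda$, with the unperturbed Hessian).

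I would then address the minimum-norm and positive-semidefiniteness clauses. Parameterized circuits are generically overparameterized, so the matrix $A$ is rank-deficient; its kernel corresponds to directions in parameter space along which the prepared state does not change to first order, and any component of $\boldsymbol{\epsilon}$ lying in $\ker A$ is therefore unphysical. The component of $\boldsymbol{\epsilon}$ orthogonal to $\ker A$ is precisely the minimum-norm solution of $A\boldsymbol{\epsilon}=-\boldsymbol{Q}$, which is also the solution consistent with the expected scaling $\|\boldsymbol{\epsilon}\|=O(\lambda)$. The PSD constraint on $\mathbf{H}^\lambda|_{\boldsymbol{\theta^*}+\boldsymbol{\epsilon}}$ is needed because the linearized first-order condition only selects a critical point; under the perturbation the former minimum may have become a saddle (as mentioned in the footnote of the introduction), so one must explicitly demand that the new Hessian be positive-semidefinite in order to pick out a genuine minimum among the critical candidates.

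The main obstacle I anticipate is the rigorous control of the truncation error and the a priori localization of the new minimizer. A clean way is an implicit-function-theorem argument on the reduced gradient map restricted to the orthogonal complement of $\ker A$: this requires non-degeneracy of $A$ on that complement and an a priori bound $\|\boldsymbol{\epsilon}\|=O(\lambda)$, so that the neglected $O(\|\boldsymbol{\epsilon}\|^2)$ terms are uniformly dominated by the linear piece $A\boldsymbol{\epsilon}\sim\boldsymbol{Q}=O(\lambda)$. The ``with high probability'' qualifier in the statement presumably encodes precisely this non-degeneracy hypothesis, namely that the ansatz family is generic enough that no spectral crossing along the interpolation between $H_s$ and $H_\lambda$ creates a bifurcation of minima or an exact zero mode of the Hessian outside of $\ker A$.
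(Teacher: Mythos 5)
Your proposal is correct and follows essentially the same route as the paper's proof: impose the vanishing-gradient condition at $\boldsymbol{\theta^*}+\boldsymbol{\epsilon}$, Taylor-expand to first order in $\boldsymbol{\epsilon}$, use that the $H_s$ part of the gradient vanishes at $\boldsymbol{\theta^*}$ to obtain $\boldsymbol{Q}$ and $A$, and then select among the (generally underdetermined) solutions by minimizing $\norm{\boldsymbol{\epsilon}}$ subject to positive-semidefiniteness of the Hessian at the shifted point. The only caveat is your side remark identifying $\ker A$ with directions along which the state is unchanged to first order (it only guarantees the energy is flat to second order), but this does not affect the derivation.
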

Intuitively, we are looking for the smaller shift (first line of Eq. \ref{eq:main_problem}) that has vanishing gradient (second line) which at the same time is a minimum (rather than saddle point or maximum), as given by the constraint in the Hessian matrix (third line). An important point is to note that the elements $A_{ij}, Q_j$ and the Hessian matrix can all be calculated using expectations and derivatives for the unperturbed state  $\ket{\psi(\boldsymbol{\theta^*})}$. 
 This means that if we use this approach iteratively (see below), to compute the ``new ground state'', one needs to prepare a fixed number of quantum states in each step\footnote{The details depend on the PQC used, e.g. on whether parameter-shift rules are applicable or not.}.
 
 The motivation behind Theorem \ref{th:main} is outlined below. As we discuss next, one could use the aforementioned theorem in an iterative manner and construct an algorithm that allows the ground state preparation of a target Hamiltonian. Consider the task of finding the ground state of \emph{a target Hamiltonian $H_1$}. The user would utilize a parameterized quantum circuit and initialize the parameters in the ground state of \emph{a different but known Hamiltonian $H_0$}. If the Hamiltonian is deformed sufficiently slowly (by introducing small perturbations) and at the final time the Hamiltonian $H_0$ has transformed into the target Hamiltonian $H_1$, then by iteratively applying Theorem \ref{th:main} (after each small deformation) the user would reach the target ground state (or an approximation of it). Note, however, that the known ground state (of the Hamiltonian $H_s$) at each iteration is the ground state of the slightly deformed Hamiltonian of the previous iteration. We can now come back to the aim of the paper, to obtain a method that uses PQC to approximate AQC. Our approach can be summarized in Algorithm \ref{alg:param_algorithm}.

\begin{algorithm}[H]
\caption{Adiabatic Quantum Computing with Parameterized Quantum Circuits}
\label{alg:param_algorithm}
\SetKwInOut{Input}{Input}
\Input{Initial Hamiltonian $H_0$\;
Target Hamiltonian $H_1$\;
Ansatz family $\ket{\psi(\boldsymbol{\theta})}=U(\boldsymbol{\theta})\ket{0}$ with $M$ parameters such that the ground state of $H_0$ and $H_1$ (or a good approximation of them) is contained within the ansatz\;
$\boldsymbol{\theta^*} = \arg\underset{\boldsymbol{\theta}}{\min}\bra{\psi(\boldsymbol{\theta})}H_0\ket{\psi(\boldsymbol{\theta})}$\;
Set of expectation values of observables: the Hessian $\textbf{H}^\lambda$ and $\{Q_i, A_{ij}\}$ as given in Eq. (\ref{eq:Q_A})\;
Total steps $K$;}
\For{$k=1,2,\ldots, K$}{
$H_k = (1-\frac{k}{K})H_0 + \frac{k}{K}H_1$\;
Measure and estimate $\{Q_i, A_{ij}, \textbf{H}^\lambda\}$ using a quantum processor\;
Use Eq \ref{eq:main_problem} and a classical solver to calculate $\ve{\epsilon}=\left(\epsilon_1, \epsilon_2, \ldots, \epsilon_M\right)$\;
{$\ve{\theta^*}=\ve{\theta^*+\epsilon}$\;}}
\Return $\ket{\psi(\boldsymbol{\theta^*})}$
\end{algorithm} 

\begin{figure*}
\includegraphics[width=17.5cm,height=10cm,keepaspectratio]{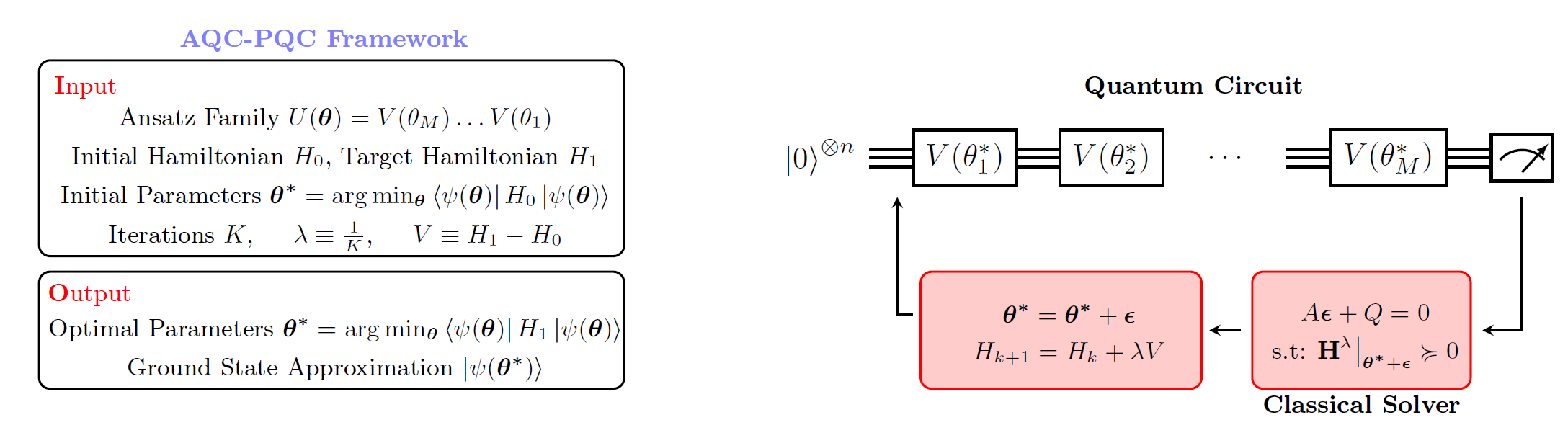} 
\caption{Adiabatic Quantum Computing with Parameterized Quantum Circuits. At every iteration, a series of observables are calculated for the instantaneous ground state. These observables form a linear system of equations whose solution corresponds to the shift vector $\boldsymbol{\epsilon}$. After the solution is found, the parameters $\boldsymbol{\theta^*}$ are shifted by $\boldsymbol{\epsilon}$ and the ground state is updated to $\ket{\psi(\boldsymbol{\theta^*}+\boldsymbol{\epsilon})}$. Then, a small perturbation is added to the Hamiltonian in order for the new observables to be calculated. Finally, a ground state approximation of $H_1$ is given at the output.}
\label{fig:loop_variational}
\end{figure*}

The algorithm can also be visualized in Figure \ref{fig:loop_variational}.  The main idea is the following. We start with discretizing AQC in a way similar to VAQC \cite{garcia2018addressing, harwood2022improving}, taking $K$ steps. We consider the (step-dependent) Hamiltonian 
 \begin{equation}
     H_k = \left(1-\frac{k}{K}\right)H_0 + \frac{k}{K}H_1.
 \end{equation} 
 Here the step subscript $k$ has the role of the (discrete in our case) time. Let us set $\lambda:=1/K \ll 1$, and $V:=(H_1-H_0)$. We can rewrite the step-dependent Hamiltonian as
 \begin{equation}
     H_k=H_0+ k\lambda V .
 \end{equation}
 We can easily see that $H_{k+1}-H_k=\lambda V$, and thus we can apply Theorem \ref{th:main} for any consecutive pair of  $\{H_k,H_{k+1}\}$. We start from $H_s$ and initialize the algorithm with the known ground state that corresponds to the initial parameters $\boldsymbol{\theta^*}$. Then for each step, we compute the shift vector $\boldsymbol{\epsilon}$ and add it to the parameters corresponding to the ground state of the previous step to obtain the ground state of the next step. For example, at the $k+1$ iteration, we can apply Theorem \ref{th:main} using $H_k$ as the known Hamiltonian $H_s$ and $\lambda (H_1-H_0)$ as the perturbation. Thus, by solving Eq. \eqref{eq:main_problem} we can calculate the shift vector $\boldsymbol{\epsilon}$ that will translate the system onto the ground state of the Hamiltonian $H_{k+1}$.
 
 As we noted after our main theorem, to compute the shift vector we need (i) to estimate the expectation values of certain observables evaluated for the starting state using our quantum device and (ii) use a classical solver to solve Eq. \ref{eq:main_problem}. The shifted ground state is then used as the ground state for the next step. In the $K$th (final) step, the Hamiltonian becomes $H_1$ and thus the ground state we recover is the desired ground state terminating the quantum/classical loop. \\

 \noindent\textbf{Comparisons.} Our method differs significantly from both the traditional adiabatic evolution and VAQC proposed by \cite{garcia2018addressing,harwood2022improving}. In AQC the adiabatic evolution, when run in a digital quantum computing device, is approximated by a series of Trotterized unitary evolutions. As a result, in order to simulate these unitaries, a circuit with a very large depth is needed which makes AQC inapplicable for near-term devices. In our approach, we take advantage of the fixed architecture of the parameterized quantum circuits, which supports quantum states that can be prepared with relatively high fidelity, while we still exploit the advantages and guarantees that adiabatic quantum computing offers (at least in the noiseless case). Similarly to variational algorithms, we delegate part of the computation to a classical processor to construct a hybrid algorithm. In AQC the (time-evolving) Hamiltonian needs to be physically implemented (or a Trotterized version of it). In our approach, we evaluate the energy corresponding to that Hamiltonian by measuring the parameterized quantum states we produce, which allows us to easily consider more general Hamiltonians (for example ones that include terms of higher order than quadratic).

Our approach also differs significantly from variational quantum algorithms, offering a number of potential advantages. The main difference in our approach is that we do not perform energy minimization in the conventional way. As such, we do not have to rely on empirical choices of the hyperparameters for the success of our method as we do not iteratively follow the direction of the negative gradient. In our approach, the energy minimization is performed by finding the closest minimum, which is identified by solving the constrained linear system in Eq \eqref{eq:main_problem}.

As we discussed, the classical part is a constrained linear solver (which can be performed very efficiently and with guarantees of finding the solution), while in variational approaches the user usually selects a first or second-order minimization method. Most of the limitations of VQAs come exactly from the optimization part and specifically due to two main bottlenecks. First of all, a random initialization of parameters may lead to either bad performance or \emph{barren plateaux}. Secondly, the emerging landscapes of VQAs are filled with a vast amount of local minima or barren plateaux that make them untrainable. While there exist methods that try to overcome these limitations \cite{sack2021quantum, jain2022graph, egger2020quantum, park2024hamiltonian} our algorithm offers a more robust strategy (see also for simple simulated experiments in Section \ref{sec:experiments}). 

Another key difference, and advantage of our approach, is that traditional variational quantum approaches require constant preparation of the quantum circuits. For each iteration, multiple quantum states need to be prepared (details depend on the classical optimizer used). As the classical optimization algorithm approaches the minimum, the number of shots and quantum state preparations increase significantly (as the gradients tend to zero). On top of that, we neither know in advance how many iterations would be required to reach convergence nor if the quantum state that we converge to is the correct ground state\footnote{It is a heuristic approach.}. In contrast, in our strategy, we can mimic adiabatic quantum computing with only $K$ steps, where $K$ is the chosen number of discretization steps and is typically much smaller than the iterations a classical optimizer requires in VQAs. Therefore, we have a known number of quantum states that are to be prepared and measured \footnote{We require $\mathcal{O}\Big((1 + \dim(\mathcal{N}_{\kappa}(A))M^2\Big)$ different quantum states for each step if we follow the approach used in Lemma \ref{lemma:qresources}}, and also have a guarantee that our method will find the solution if AQC can solve the problem efficiently and $K$ is chosen suitably. 

A final potential advantage compared to traditional variational approaches, is that quantum error mitigation methods may be more effective in our case. There are QEM methods (for example \cite{bennewitz2022neural}) that are specifically applicable if the quantum state considered is close to the ground state. In our method, all quantum states used are close to the ground state of some time-dependent Hamiltonian, and thus these approaches should be more productive. A full analysis of these implications, as well as the questions stated in the above paragraph, are a subject for further research.

\section{Parameterized Perturbation Theory}
\label{sec:param_theory}

In this section, we will first prove Theorem \ref{th:main}, which reduces the problem of finding how a small perturbation of a Hamiltonian shifts the parameters that minimize the energy, to a constrained system of linear equations. We then give two ways to impose the constraint. Finally, for a special (but very widely used) class of families of parameterized quantum circuits, we give expressions of how to exactly evaluate the required derivatives and outline the method we follow for our simulated experiments in the next section. 

\begin{proof}[Proof of Theorem \ref{th:main}] We first give the outline in four steps.
To establish that we have found a minimum of a function we need to (i) check that the gradient of the function, evaluated at that point, vanishes (critical point) and that (ii) the second derivative (Hessian) corresponds to a positive semi-definite matrix (is a minimum). Our function is the (perturbed) energy $F_\lambda (\ve{\theta})$ corresponding to the Hamiltonian $H_\lambda=H_s+\lambda V$ of the parametrized quantum state $\ket{\psi(\ve{\theta})}$. The first step is to consider the gradient of a general shifted point $\ve{\theta^*}+\ve{\epsilon}$. For the new ground state, this gradient should vanish. However, since our aim is to find the shift vector $\ve{\epsilon}$, we cannot measure the energy of the shifted state until we know (or until we have a good guess for) the shift vector. The second step is to exploit the fact that the Hamiltonian is only slightly perturbed. We expect that the new ground state is close to the one we started, thus the shift vector $\ve{\epsilon}$ is small and we can expand the energy around the previous minimum using a Taylor expansion, while we are justified in keeping only the leading (linear) terms in $\boldsymbol{\epsilon}$. Now all the functions (and derivatives) required are evaluated for the previous (and known) value $\ve{\theta^*}$. The third step is to evaluate the derivatives either approximately using finite differences, or exactly if the PQC allows us to use the parameter shift rules (Eqs \ref{eq:par_shift},\ref{eq:par-hessian}). This reduces the problem to a system of linear equations. This system (in general) has many solutions since some of the equations are not independent. If $\lambda$ is sufficiently small, and for well-behaved Hamiltonians, the smallest shift vector that gives a minimum is the global minimum and thus the ground state. The final fourth step requires exactly this, to minimize over the possible solutions for the shift vector, while also confirming that the solution is indeed a minimum by checking the positivity of the Hessian. \\


\noindent \emph{Step 1.} We consider the energy $F_\lambda (\ve{\theta})$ given by

\begin{equation}
    F_\lambda(\boldsymbol{\theta}) = \bra{0} U^\dagger (\boldsymbol{\theta}) H_\lambda U(\boldsymbol{\theta}) \ket{0}.
\end{equation}
We are interested in the value of the new ground state, i.e. the new minimum of the energy. We expect the new ground state, since the Hamiltonian is perturbed slightly by $\lambda V$, to be close to the previous ground state, i.e. we search for some point $\ve{\theta^*+\epsilon}$. For this point to be a minimum, the gradient of the energy should vanish,
\begin{equation}
\label{eq:gradient1}
    \frac{\partial}{\partial\theta_i}F_\lambda(\ve{\theta})\bigg{|}_{\ve{\theta}=\ve{\theta^*+\epsilon}}=0 \quad \forall \quad i.
\end{equation}
\\
\noindent \emph{Step 2.} We can expand the energy around the old ground state, using a Taylor expansion, as 
\begin{equation}
    F_\lambda (\boldsymbol{\theta^*} + \boldsymbol{\epsilon}) = F_\lambda(\boldsymbol{\theta^*}) + \sum_{i=1}^M \epsilon_i\frac{\partial }{\partial \theta_i} F_\lambda(\ve{\theta^*}) + \mathcal{O}\left(\norm{\boldsymbol{\epsilon}}^2\right).
\label{eq:taylor_Fl}
\end{equation}
For suitably small perturbation (i.e. sufficiently small $\lambda$), the shift vector $\ve{\epsilon}$ is also small and we can approximate accurately the energy by keeping up to the linear in $\boldsymbol{\epsilon}$ term of the Taylor expansion and plug it into Eq. (\ref{eq:gradient1}) to impose a vanishing gradient,
\begin{equation}
\label{eq:main0}
    \frac{\partial}{\partial\theta_i} F_\lambda(\ve{\theta^*})+ \frac{\partial}{\partial\theta_i}\left( \sum_{j=1}^M \epsilon_j\frac{\partial }{\partial \theta_j} F_\lambda(\ve{\theta^*})\right)=0 \quad \forall\quad i.
\end{equation}
By noting that
\[F_\lambda(\ve{\theta^*})=\bra{\psi(\ve{\theta^*})}H_s\ket{\psi(\ve{\theta^*})}+\lambda \bra{\psi(\ve{\theta^*})}V\ket{\psi(\ve{\theta^*})}
\]
and that
\[
\frac{\partial}{\partial \theta_i}\bra{\psi(\ve{\theta^*})}H_s\ket{\psi(\ve{\theta^*})}=0\quad\forall\quad i,
\]
and because $\ve{\theta^*}$ is the ground state of $H_s$, we have
\begin{equation}
\label{eq:Qi1}
    \frac{\partial}{\partial\theta_i} F_\lambda(\ve{\theta^*})=\lambda \frac{\partial}{\partial \theta_i}\bra{\psi(\ve{\theta^*})}V\ket{\psi(\ve{\theta^*})}:= Q_i.
\end{equation}
Similarly, we can define
\begin{equation}
\label{eq:Aij1}
    A_{ij}:=\frac{\partial^2}{\partial\theta_i\partial\theta_j}F_\lambda(\ve{\theta^*}).
\end{equation}
Eq. (\ref{eq:main0}) becomes
\begin{equation}
    Q_i+\sum_j A_{ij}\epsilon_j=0\quad\forall\quad i,
\end{equation}
the system of equations in Eq. (\ref{eq:main_problem}) in Theorem \ref{th:main}.
\\
\noindent \emph{Step 3.} Both the $Q_i$ and the $A_{ij}$ involve derivatives of expectation values evaluated at the known point $\ve{\theta^*}$. As we have seen in Section \ref{sec:preliminaries}, one can evaluate such derivatives by computing the expectation values at a number of points (one or two per dimension of the parameter space) related to $\ve{\theta^*}$. In the general case, those points need to be very close to $\ve{\theta^*}$, and the result is an approximation (finite differences). In the special case that the generators have a certain specific form (see later) one can evaluate exact derivatives using the parameter-shift rule (see Subsection \ref{subsection:param_circuits}). The exact choice depends on the PQC that one considers, suggesting that PQC that admit parameter-shift rules would perform more accurately in our method. \\

\noindent \emph{Step 4.} To determine the (small) shift in the parameter space that the ground state moved because of a small perturbation, we need to find (i) the turning point closest to the old ground state (vanishing determinant) that also (ii) is actually a minimum. To ensure the former we need to take the $\ve{\epsilon}$ with the minimum norm, while for the latter we need to ensure that the Hessian $H^\lambda_{jk}=\frac{\partial^2}{\partial\theta_j\partial\theta_k}F_\lambda(\ve{\theta})$ evaluated at the new point $\ve{\theta^*+\epsilon}$ is a positive semi-definite matrix\footnote{Recently, the
authors of \cite{huembeli2021characterizing} numerically analyzed the objective function landscapes that appear in the highly parameterized vector spaces of variational quantum algorithms. They noted that in most cases, the Hessian matrix at the local and global minima is positive semi-definite, with the
zero eigenvalue being highly degenerate.}. The shift vector is therefore the solution to the problem
\begin{equation}
\begin{gathered}
        \text{min } \norm{\boldsymbol{\epsilon}}\\
        \text{subject}\text{ to: } A\boldsymbol{\epsilon} + \ve{Q} = 0,\\
          \textbf{H}^\lambda\big{|}_{\boldsymbol{\theta^*}+\boldsymbol{\epsilon}} \succcurlyeq 0.
\end{gathered}
\end{equation}
\end{proof}
To solve the constrained problem in Eq. (\ref{eq:main_problem}), both a classical and a quantum device is needed. Ideally, we would like to first measure some expectation values, and then, with these values as input, use a (fully) classical solver to find the shift vector. However, the constraint involves checking the (semi) positivity of a matrix (Hessian) which is evaluated for the new ground state $\ve{\theta^*} + \ve{\epsilon}$. Since the shift vector is not known (yet), one would think that the constraint cannot even be defined unless one has a candidate shift vector. While this approach is possible (see \emph{remark} later), we can also exploit, once more, the fact that the shift vector is small for small perturbations. 
\begin{lemma}
\label{lemma:first}
The mathematical problem of Eq. (\ref{eq:main_problem}) can be solved using expectation values of observables and their derivatives evaluated at the known point $\ve{\theta^*}$. Specifically, the Hessian at $\ve{\theta^*+\epsilon}$ can be approximated using this expression
\begin{equation}
\label{eq:hessian2}
    \textbf{H}^\lambda_{jk}\big{|}_{\ve{\theta=\theta^*+\epsilon}}=\textbf{H}^\lambda_{jk}\big{|}_{\ve{\theta=\theta^*}}+\sum_{i=1}^M \epsilon_i \frac{\partial}{\partial\theta_i}\textbf{H}^\lambda_{jk}\big{|}_{\ve{\theta=\theta^*}}
\end{equation}
\end{lemma}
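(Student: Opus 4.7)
The plan is to treat each matrix element $\textbf{H}^\lambda_{jk}(\ve{\theta})$ as a smooth scalar function of the parameter vector $\ve{\theta}$ (which it is, since the energy $F_\lambda$ depends smoothly on the angles through the unitary $U(\ve{\theta})$ built from exponentials of fixed generators) and to expand it in a multivariate Taylor series around the known point $\ve{\theta^*}$, evaluated at the nearby point $\ve{\theta^*}+\ve{\epsilon}$.

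First I would write
\[
\textbf{H}^\lambda_{jk}(\ve{\theta^*}+\ve{\epsilon})=\textbf{H}^\lambda_{jk}(\ve{\theta^*})+\sum_{i=1}^M \epsilon_i \frac{\partial \textbf{H}^\lambda_{jk}}{\partial\theta_i}\bigg{|}_{\ve{\theta^*}}+\mathcal{O}(\Vert \ve{\epsilon}\Vert^2)
\]
and then invoke the same smallness argument used in Step 2 of the proof of Theorem~\ref{th:main}: because the perturbation is of order $\lambda \ll 1$ and $\ve{\theta^*}$ is a (nondegenerate) minimum of the unperturbed Hamiltonian, the shift $\ve{\epsilon}$ inherits a scaling in $\lambda$, so the quadratic remainder is of order $\lambda^2$ and can be dropped to the same accuracy with which we already kept only linear-in-$\ve{\epsilon}$ terms in the gradient equation leading to Eq.~(\ref{eq:main_problem}). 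Truncating yields exactly Eq.~(\ref{eq:hessian2}).

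Second, I would verify that every coefficient appearing on the right-hand side is an observable quantity evaluated at the \emph{known} point $\ve{\theta^*}$. The entries $\textbf{H}^\lambda_{jk}(\ve{\theta^*})$ are accessible as second derivatives of the expectation value via Eq.~(\ref{eq:par-hessian}) (or by finite differences), and the new coefficient $\partial_i \textbf{H}^\lambda_{jk}|_{\ve{\theta^*}}$ is simply a third derivative of $F_\lambda$ at $\ve{\theta^*}$, which can be obtained either by a three-point finite-difference stencil or, when the generators admit a two-eigenvalue parameter-shift rule, exactly as a sum of $2^3=8$ expectation values on appropriately shifted states. Substituting this affine-in-$\ve{\epsilon}$ expression into the constraint $\textbf{H}^\lambda|_{\ve{\theta^*}+\ve{\epsilon}} \succcurlyeq 0$ turns it into a linear matrix inequality in $\ve{\epsilon}$, so the full optimization in Eq.~(\ref{eq:main_problem}) becomes a semidefinite program with the linear equality constraint $A\ve{\epsilon}+\ve{Q}=0$ and the convex objective $\Vert\ve{\epsilon}\Vert$, which can be handed to any standard SDP solver.

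The main obstacle is justifying rigorously that dropping the $\mathcal{O}(\Vert\ve{\epsilon}\Vert^2)$ remainder does not spoil the positivity check, which is more delicate than the analogous truncation in the gradient equation: there the truncation controls a small quantity by another small quantity, whereas here we are using an \emph{approximate} matrix to decide a positivity property. A careful argument would combine a Weyl-type eigenvalue-perturbation bound with an explicit estimate of the third derivatives of $F_\lambda$ in terms of the operator norms of the generators $g_k$ and of $H_s$ and $V$, and would then require that the smallest eigenvalue of $\textbf{H}^\lambda(\ve{\theta^*})$ exceed this $\mathcal{O}(\lambda^2)$ error. This is essentially the nondegeneracy assumption already implicit in Theorem~\ref{th:main}, and for sufficiently small $\lambda$ it is automatically satisfied.
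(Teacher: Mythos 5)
Your proposal is correct and follows essentially the same route as the paper: the paper's proof is just the observation that the linear system is already defined at $\ve{\theta^*}$ and that the Hessian at $\ve{\theta^*}+\ve{\epsilon}$ is obtained by Taylor-expanding and keeping terms linear in $\ve{\epsilon}$, exactly as you do. Your additional remarks on measuring the third derivatives, the resulting semidefinite program, and bounding the truncation error are consistent with what the paper develops separately in its later section on solving the constrained system and in its error-analysis appendix.
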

\begin{proof}
The linear set of equations is already defined at $\ve{\theta^*}$ while the Hessian is obtained by Taylor-expanding and keeping up to linear terms.
\end{proof}
Here we should note that to compute the Hessian in the way described, we require up to third derivatives of the ground state energy. Since each derivative requires us to evaluate the state in at least one different point per dimension of the parameter space, computing third derivatives would require $O(M^3)$ quantum states.\\

\noindent\textbf{Remark.}
An alternative approach would be to use a method that has more cycles of classical-quantum subroutines. Specifically, one could first evaluate expectation values at $\ve{\theta^*}$ and solve the simpler (unconstrained) system
\begin{equation}
    \begin{gathered}
        \text{min } \norm{\boldsymbol{\epsilon}}\\
        \text{subject}\text{ to: } A\boldsymbol{\epsilon} + \ve{Q} = 0.
\label{eq:experiments}
\end{gathered}
\end{equation}
Then, using the trial shift-vector $\ve{\tilde{\epsilon}}$, prepare the new set of states (of $O(M^2)$) to check if the Hessian at the point $\ve{\theta^*}+\ve{\tilde{\epsilon}}$ is positive semi-definite. If it is, one outputs $\ve{\epsilon=\tilde{\epsilon}}$. If it is not, one goes back and finds a new candidate shift-vector and prepares a new set of quantum states to check the Hessian at that point. The process terminates when a suitable solution is found. While this approach may require $O(M^2)$ preparations of quantum states, it has some disadvantages that led us to focus on Lemma \ref{lemma:first}: (i) It is not clear after how many rounds we are guaranteed (or likely) to find a suitable solution; (ii) We need to go back and forth between the classical and quantum processors; (iii) Optimized classical solvers for the constrained problem cannot be used, and a naive, trial-and-error method imposing the constraint is followed. \\

As a final point, we note that for most PQCs, for example, those that have Pauli rotations as parameterized quantum gates, one can use parameter-shift rules to evaluate the derivatives exactly.

\begin{lemma}
\label{th:main2}
Consider the statement in Theorem \ref{th:main}, where the parameterized quantum circuit is defined via unitaries that have generators $g_j$ with two distinct eigenvalues $\pm r$. Then Eq. (\ref{eq:main_problem}) can be evaluated exactly using
\begin{gather}
\label{eq:QA2}
Q_i= \frac{\lambda}{2}\left(V\left(\ve{\theta^*}+\frac{\pi}{2}\ve{\hat{e}_i}\right)-V\left(\ve{\theta^*}-\frac{\pi}{2}\ve{\hat{e}_i})\right)\right)\\
A_{ij}= \frac{1}{4}\left(F_\lambda\left(\ve{\theta^*}+\frac{\pi}{2}\ve{\hat{e}_i}+\frac{\pi}{2}\ve{\hat{e}_j}\right)-F_\lambda\left(\ve{\theta^*}-\frac{\pi}{2}\ve{\hat{e}_i}+\frac{\pi}{2}\ve{\hat{e}_j}\right)\right.\nonumber\\
\left.-F_\lambda\left(\ve{\theta^*}+\frac{\pi}{2}\ve{\hat{e}_i}-\frac{\pi}{2}\ve{\hat{e}_j}\right)+F_\lambda\left(\ve{\theta^*}-\frac{\pi}{2}\ve{\hat{e}_i}-\frac{\pi}{2}\ve{\hat{e}_j}\right)\right)\nonumber
\end{gather}
and similarly, the Hessian in Eq. (\ref{eq:hessian2}) can be evaluated by taking third derivatives with the parameter-shift rule.
\end{lemma}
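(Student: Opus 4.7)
My plan is to recognise that Eq. \eqref{eq:QA2} is essentially the direct application of the parameter-shift rules of Eqs. \eqref{eq:par_shift} and \eqref{eq:par-hessian}, already recalled in Subsection \ref{subsection:param_circuits}, to the two quantities $Q_i$ and $A_{ij}$ that appear in Theorem \ref{th:main}. The assumption that every generator $g_j$ has exactly two distinct eigenvalues $\pm r$ is precisely the hypothesis under which those rules are exact, so once $Q_i$ and $A_{ij}$ are rewritten in a form to which the existing rules apply, the identities follow.

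First I would handle $Q_i$. From Eq. \eqref{eq:Q_A}, $Q_i = \lambda\,\partial_{\theta_i}\langle\psi(\boldsymbol{\theta})|V|\psi(\boldsymbol{\theta})\rangle\big|_{\boldsymbol{\theta^*}}$, which is $\lambda$ times the first-order partial derivative of a PQC expectation value of the fixed observable $V$. Substituting the scalar function $\boldsymbol{\theta}\mapsto\langle\psi(\boldsymbol{\theta})|V|\psi(\boldsymbol{\theta})\rangle$ for $F$ in Eq. \eqref{eq:par_shift} and multiplying by $\lambda$ reproduces the first line of Eq. \eqref{eq:QA2}. The quantity $A_{ij}$ is by definition the mixed second derivative of $F_\lambda$, and because $H_\lambda = H_s + \lambda V$ does not depend on $\boldsymbol{\theta}$, the scalar function $F_\lambda$ is an ordinary PQC expectation value to which Eq. \eqref{eq:par-hessian} applies verbatim; this immediately yields the second expression of Eq. \eqref{eq:QA2}.

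The remaining claim, that the Hessian in Eq. \eqref{eq:hessian2} can also be evaluated exactly, reduces to the third-order partial derivative $\partial_{\theta_i}\mathbf{H}^\lambda_{jk}\big|_{\boldsymbol{\theta^*}}$ of $F_\lambda$. The key observation is that, under the two-eigenvalue hypothesis, $F_\lambda$ is, along each coordinate $\theta_\ell$, a trigonometric polynomial of degree at most one in $2r\theta_\ell$, and this structure is stable under partial differentiation in the other coordinates. Consequently the parameter-shift rule can be iterated three times, once per index, producing an exact expression consisting of $2^3 = 8$ shifted evaluations of $F_\lambda$. The main obstacle is not conceptual but combinatorial: the proof reduces to careful bookkeeping of signs, prefactors and shift sizes as the rule is chained, together with a final sanity check to confirm that in the trivial limit $V\to 0$ the formulas recover the known derivatives of the unperturbed energy $F_0$ (which vanish to first order by assumption on $\boldsymbol{\theta^*}$).
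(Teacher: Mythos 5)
Your proposal is correct and follows essentially the same route as the paper, whose proof is simply that the identities follow directly from the parameter-shift rules of Eqs.~\eqref{eq:par_shift} and \eqref{eq:par-hessian}, iterated once more for the third derivatives entering Eq.~\eqref{eq:hessian2}; you merely spell out the substitution of $\langle\psi(\ve{\theta})|V|\psi(\ve{\theta})\rangle$ and $F_\lambda$ into those rules. The only cosmetic point worth noting is that the specific shifts $\pi/2$ and prefactors $\lambda/2$, $1/4$ in Eq.~\eqref{eq:QA2} correspond to the case $r=1/2$ (standard Pauli rotations), a specialization left implicit in both your argument and the paper's.
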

\begin{proof}
This follows directly using the parameter-shift rules given in Eqns \eqref{eq:par_shift}, \eqref{eq:par-hessian}.
\end{proof}

\section{Solving the constrained linear system}
\label{sec:solving_the_linear_system}

\noindent In this section, we will decompose the main mathematical problem incorporated in AQC-PQC and focus on all of its subsequent parts separately. As it is clear, the hardness in our approach comes in solving the main mathematical problem in Eq. \eqref{eq:main_problem}:

\begin{equation*}
\begin{gathered}
        \text{min } \norm{\boldsymbol{\epsilon}}\\
        \text{subject}\text{ to: } A\boldsymbol{\epsilon} + \boldsymbol{Q} = 0,\\
        \textbf{H}^\lambda\big{|}_{\boldsymbol{\theta^*}+\boldsymbol{\epsilon}} \succcurlyeq 0,
\end{gathered}
\end{equation*}
using the least number of classical and quantum resources.

Recall that we are searching for the minimum vector $\boldsymbol{\epsilon} \in \mathbb{R}^m$ that has zero gradient and is a (global) minimum. The norm of a vector $\norm{\boldsymbol{\epsilon}}$ is a convex function. The same holds for the linear equation constraint, as the affine map $A\boldsymbol{\epsilon} + \boldsymbol{Q}$ is also a convex function. The main bottleneck in our problem is that the Hessian at the point ($\boldsymbol{\theta^*} + \boldsymbol{\epsilon})$ is not (in general) an affine map in $\boldsymbol{\epsilon}$, making it a non-convex problem. However, as we will see, this is not the case when $\boldsymbol{\epsilon}$ is small.\\

\subsection{Equality Constraint}

\noindent We start with the equality constraints:
\begin{equation}
\label{eq:lin_equation}
    A\boldsymbol{\epsilon} + \boldsymbol{Q} = 0
\end{equation}
with $A\in \mathcal{S}^m$ ($\mathcal{S}^m = \{X |\; X^T = X\}$ is the space of symmetric matrices) and $\boldsymbol{Q}\in \mathbb{R}^m$. The first step is to eliminate all \emph{equality constraints}, as in most cases this linear system of equations is overdetermined. Consider any $\boldsymbol{\epsilon_0}$ that is a solution of Eq. \eqref{eq:lin_equation}, i.e. $A\boldsymbol{\epsilon_0} + \boldsymbol{Q} = 0$. It will be wise, for reasons that will become clear later, to choose $\boldsymbol{\epsilon}_0$ to be the smallest vector that satisfies the linear equation, i.e. $\boldsymbol{\epsilon_0}$ is the solution of the convex problem:
\begin{equation*}
\begin{gathered}
        \text{min } \norm{\boldsymbol{\epsilon}}\\
        \text{subject}\text{ to: } A\boldsymbol{\epsilon} + \boldsymbol{Q} = 0
\end{gathered}
\end{equation*}

Such a vector is unique. Let $\mathcal{F}$ be the feasible set of solutions of Eq. \eqref{eq:lin_equation}:
\begin{equation}
    \begin{gathered}
    \mathcal{F} = \{\boldsymbol{\epsilon}\big{|} A\boldsymbol{\epsilon}+ \boldsymbol{Q} = 0\}\implies 
    \mathcal{F} = \{\boldsymbol{\epsilon}| A(\boldsymbol{\epsilon} - \boldsymbol{\epsilon_0}) = 0\} \implies \\
    \mathcal{F} = \{ \boldsymbol{u}+\boldsymbol{\epsilon_0}| A\boldsymbol{u} = 0\}
    \end{gathered}
\end{equation}
where in the last line we defined $\boldsymbol{u} \equiv \boldsymbol{\epsilon}-\boldsymbol{\epsilon_0}$. As a result, all $\boldsymbol{\epsilon} = \boldsymbol{u}+\boldsymbol{\epsilon_0}$, with $\boldsymbol{u} \in \mathcal{N}(A)$ (where $\mathcal{N}(A)$ is the null space of $A$) correspond to solutions of the linear system of equations. Since Eq. \eqref{eq:lin_equation} is a linear equation (whose solution provides us with the critical points in the energy landscape) which is approximately equal to 0, it is more appropriate to define the notion of an $\kappa$-approximate null space (denoted as $\mathcal{N}_{\kappa}(A)$), i.e $A \boldsymbol{u}\approx 0$.\\

\noindent \textbf{Definition 1 ($\kappa$-Approximate null space).} \textit{Let $A\in \mathcal{S}^m$ and let $\kappa > 0$. Consider the set of eigenvectors $\{\boldsymbol{v_1}, \ldots, \boldsymbol{v_l}\}$ of $A$ such that $\norm{\boldsymbol{v_i}} = 1$ and $\norm{A\boldsymbol{v_i}} \leq \kappa$. The $\kappa$-approximate null space $\mathcal{N}_{\kappa}(A)$ is defined as $\mathcal{N}_{\kappa}(A) := \text{span}(\boldsymbol{v_1}, \ldots, \boldsymbol{v_l})$}.\\

\noindent We will now proceed and explain how one can construct an approximate null space. As a first step, we apply a singular value decomposition (SVD) of the matrix $A$. By doing so, the matrix A can be decomposed as:
\begin{equation}
    A = U \Sigma V^T
\label{eq:SVD_decomposition}
\end{equation}
where $U, V$ are orthogonal matrices and $\Sigma$ is a diagonal matrix with singular values of $A$ as its entries. Let $\text{diag}(\Sigma) = (\sigma_1(A)> \sigma_2(A)> \ldots > \sigma_m(A)>0)$ be the singular values of $A$. Since $A$ is symmetric, $U=V$ and their columns are the eigenvectors of $A$. Additionally, the singular values $\sigma_i$ are the absolute values of the eigenvalues of $A$, i.e. $\sigma_i = \abs{\lambda_i}$. If $\boldsymbol{v_i}$ are the eigenvectors of $A$ then from Eq. \eqref{eq:SVD_decomposition}, we can write $A$ as:
\begin{equation}
    A = \sum_{i=1}^m \sigma_i \boldsymbol{v_i} \boldsymbol{v_i}^T
\label{eq:a-decomp}
\end{equation}

As our next step, we can apply a low-rank approximation of the matrix $A$. Specifically, from Eq. \eqref{eq:a-decomp} we can keep all terms up to the $k$-th term. Let $A_k = \sum_{i\leq k} \sigma_i \boldsymbol{v_i} \boldsymbol{v_i}^T$ be the $k$-rank approximation of $A$. The error in the approximation is given in the Frobenius norm as:
\begin{equation}
    \norm{A-A_k}_F = \sqrt{\sum_{k+1}^{m}\sigma_i^2}
\end{equation}
and is the optimal $k$-rank approximation according to \textit{Eckart-Young-Mirsky Theorem}. As a result, the above analysis provides a recipe for how to define the $\kappa$-approximate null space $\mathcal{N}_{\kappa}(A)$. That is, one can set a threshold $\kappa>0$ with $\kappa \approx 0$ so that all singular values smaller than $\kappa$ are neglected. The basis of $\mathcal{N}_{\kappa}(A)$ is then clearly $(\boldsymbol{v_{k+1}}, \ldots, \boldsymbol{v_m}) = \text{span}(\mathcal{N}_{\kappa}(A))$ with $\dim(\mathcal{N}_{\kappa}(A)) = m-k$. Consider now the most general vector $\boldsymbol{\mu} \in \mathcal{N}_{\kappa}(A)$ 
\begin{equation}
    \boldsymbol{\mu} = c_{k+1} \boldsymbol{v_{k+1}} +\ldots + c_m \boldsymbol{v_m} = \sum_{i=k+1}^m c_i \boldsymbol{v_i}
\end{equation}
with $c_i\in \mathbb{R}$. As a result, the main mathematical problem \eqref{eq:main_problem} has thus been reformulated to:
\begin{equation}
\begin{gathered}
    \min_{\boldsymbol{\mu}} ||\boldsymbol{\epsilon_0} + \boldsymbol{\mu}|| \\
    \text{subject}\text{ to: } \textbf{H}^\lambda\big{|}_{\boldsymbol{\theta^*}+\boldsymbol{\epsilon_0} + \boldsymbol{\mu}} \succcurlyeq 0\\
    \boldsymbol{\mu} \in \mathcal{N}_{\kappa}(A)
\label{eq:reformulated_problem}
\end{gathered}
\end{equation}

\begin{corollary}
    Solving the main mathematical problem in Eq. \eqref{eq:main_problem} reduces the classical search to only $\dim(\mathcal{N}_{\kappa}(A)) < m$ parameters.
\end{corollary}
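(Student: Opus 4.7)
The plan is to leverage the reformulation already carried out in Eq. (\ref{eq:reformulated_problem}). Once the equality constraint $A\boldsymbol{\epsilon}+\boldsymbol{Q}=0$ has been absorbed through the decomposition $\boldsymbol{\epsilon}=\boldsymbol{\epsilon_0}+\boldsymbol{\mu}$ with $\boldsymbol{\mu}\in\mathcal{N}_{\kappa}(A)$, both the objective $\|\boldsymbol{\epsilon_0}+\boldsymbol{\mu}\|$ and the semidefiniteness constraint on the Hessian depend only on $\boldsymbol{\mu}$, because $\boldsymbol{\epsilon_0}$ is the uniquely determined minimum-norm particular solution of the linear system and thus contributes no free parameters.

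The key step is then to count the effective classical degrees of freedom in $\boldsymbol{\mu}$. I would invoke the explicit basis $\{\boldsymbol{v_{k+1}},\ldots,\boldsymbol{v_m}\}$ of $\mathcal{N}_{\kappa}(A)$ extracted from the SVD of $A$, and expand $\boldsymbol{\mu}=\sum_{i=k+1}^m c_i\boldsymbol{v_i}$. Substituting this expansion into the reformulated problem, the search collapses to an optimization over the coefficient vector $(c_{k+1},\ldots,c_m)\in\mathbb{R}^{m-k}$. By the definition of the $\kappa$-approximate null space, $\dim(\mathcal{N}_{\kappa}(A))=m-k$, which gives the claimed parameter count and shows that the classical solver need not scan the full $m$-dimensional space.

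The only delicate point, and thus the main obstacle, is establishing the strict inequality $\dim(\mathcal{N}_{\kappa}(A))<m$. This requires $k\geq 1$, i.e.\ at least one singular value of $A$ must lie strictly above the chosen threshold $\kappa$. This is the generic situation, since $A$ is the Hessian of the energy $F_\lambda$ at the previous ground state $\boldsymbol{\theta^*}$ and for a physically sensible Hamiltonian the cost landscape has nontrivial curvature in at least one direction. In the degenerate edge case where every singular value falls below $\kappa$, one has $\mathcal{N}_{\kappa}(A)=\mathbb{R}^m$ and the statement is vacuous; provided $\kappa$ is chosen consistently with the non-degenerate portion of the Hessian spectrum, the strict inequality follows and the corollary is proven.
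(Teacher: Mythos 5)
Your argument follows the paper's own route: eliminate the equality constraint via the minimum-norm particular solution $\boldsymbol{\epsilon_0}$ and the SVD-based decomposition $\boldsymbol{\epsilon}=\boldsymbol{\epsilon_0}+\boldsymbol{\mu}$ with $\boldsymbol{\mu}\in\mathcal{N}_{\kappa}(A)$, then count the $m-k$ coefficients in the basis $\{\boldsymbol{v_{k+1}},\ldots,\boldsymbol{v_m}\}$, which is exactly how the corollary follows from Eq.~\eqref{eq:reformulated_problem} in the text. Your added remark that the strict inequality $\dim(\mathcal{N}_{\kappa}(A))<m$ needs at least one singular value above the threshold $\kappa$ is a reasonable clarification of an assumption the paper leaves implicit, but it does not change the approach.
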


As a result, we can reduce the overall hardness required in gradient approaches, where we usually have to optimize all the parameters of the parameterized quantum circuit. Relevant approaches where only a subset of the total parameters are varied have been previously utilized in \cite{kolotouros2023random, ding2023random}. Ii is worth noting that in problems that we examine in section Sec \ref{sec:experiments}, the dimension of the $\kappa$-approximate null space is usually much smaller than the total number of parameters $m$.

\subsection{Positive-semidefinite constraint}

Since we eliminated the equality constraint in Eq. \eqref{eq:main_problem} we will proceed to satisfy the second constraint which corresponds to the Hessian. Specifically, our goal is to make the Hessian matrix positive-semidefinite ($\textbf{H}^\lambda|_{\boldsymbol{\theta^*}+\boldsymbol{\epsilon}}\succcurlyeq 0$), as the solution of Eq. \eqref{eq:main_problem} must translate the system to the ground state of the perturbed Hamiltonian which is a (global) minimum. Making the Hessian matrix positive-semidefinite is equivalent to making its smallest eigenvalue greater or equal to zero.

Consider a matrix $X\in \mathcal{S}^m$ where $\mathcal{S}^m = \{X |\; X^T = X\}$ is the space of symmetric matrices. Clearly, any Hessian matrix belongs in $\mathcal{S}^m$.

\begin{definition}
    (Minimum eigenvalue function). The function $f:\mathcal{S}^m \rightarrow \mathbb{R}$ that inputs a symmetric matrix $X$ and outputs its minimum eigenvalue is defined as:
    \begin{equation}
        f(X) = \inf_{v}\{\boldsymbol{v}^T X \boldsymbol{v} \; | \; \norm{\boldsymbol{v}} = 1 \}
    \end{equation}
\end{definition}

The minimum eigenvalue function has the following important property, highlighted in Lemma \ref{lemma:minimum_eigenvalue_function}.

\begin{lemma}
    The function $f:\mathcal{S}^m \rightarrow \mathbb{R}$ that inputs a symmetric matrix $X$ and outputs its minimum eigenvalue is concave. If $X_1, X_2 \in \mathcal{S}^m$ and $0\leq \theta \leq 1$, then $f$ satisfies Jensen's inequality:
    \begin{equation}
    f[\theta X_1 + (1-\theta) X_2] \geq \theta f(X_1) + (1-\theta) f(X_2)
    \end{equation}
    \label{lemma:minimum_eigenvalue_function}
\end{lemma}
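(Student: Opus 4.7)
The plan is to exploit directly the variational characterization given in Definition 2 just before the Lemma, namely
\[
f(X) = \inf_{\boldsymbol{v}} \{\boldsymbol{v}^T X \boldsymbol{v} \mid \norm{\boldsymbol{v}} = 1\}.
\]
The key structural observation is that for each \emph{fixed} unit vector $\boldsymbol{v}$, the map $X \mapsto \boldsymbol{v}^T X \boldsymbol{v}$ is \emph{linear} in $X$, because each entry of $X$ appears linearly in the quadratic form. Thus $f$ is the pointwise infimum of a family of linear (hence concave) functions of $X$, and this is a standard route to proving concavity.

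To make this concrete, I would first fix an arbitrary unit vector $\boldsymbol{v}$ and use bilinearity of $(\boldsymbol{v}, X) \mapsto \boldsymbol{v}^T X \boldsymbol{v}$ in $X$ to expand
\[
\boldsymbol{v}^T\bigl[\theta X_1 + (1-\theta) X_2\bigr]\boldsymbol{v}
= \theta\, \boldsymbol{v}^T X_1 \boldsymbol{v} + (1-\theta)\, \boldsymbol{v}^T X_2 \boldsymbol{v}.
\]
Next, I would take the infimum over all unit $\boldsymbol{v}$ on both sides. Using the elementary inequality $\inf_{\boldsymbol{v}} [g(\boldsymbol{v}) + h(\boldsymbol{v})] \geq \inf_{\boldsymbol{v}} g(\boldsymbol{v}) + \inf_{\boldsymbol{v}} h(\boldsymbol{v})$ together with $\inf_{\boldsymbol{v}} [c\, g(\boldsymbol{v})] = c\, \inf_{\boldsymbol{v}} g(\boldsymbol{v})$ for $c \geq 0$ (both of which apply here since $\theta, 1-\theta \in [0,1]$), the right-hand side becomes exactly $\theta f(X_1) + (1-\theta) f(X_2)$, while the left-hand side is $f[\theta X_1 + (1-\theta) X_2]$ by definition.

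The only even mildly subtle step is the subadditivity of the infimum, which is generally strict because the minimizing $\boldsymbol{v}$ for $\boldsymbol{v}^T X_1 \boldsymbol{v}$ need not coincide with the minimizer for $\boldsymbol{v}^T X_2 \boldsymbol{v}$; but this is precisely what supplies the desired $\geq$ direction of Jensen's inequality. I do not anticipate any genuine obstacle: the proof collapses to a one-line calculation once the variational formula is invoked, and the fact that $X \in \mathcal{S}^m$ ensures the infimum is attained (so one could equivalently write $\min$ instead of $\inf$ with no change in the argument).
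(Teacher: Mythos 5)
Your proof is correct and follows essentially the same route as the paper: both identify $f$ as the pointwise infimum over unit vectors $\boldsymbol{v}$ of the functions $X \mapsto \boldsymbol{v}^T X \boldsymbol{v}$, which are linear in $X$, and conclude concavity (the paper simply cites a standard reference where you spell out the infimum inequality explicitly). No gaps.
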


\begin{proof}
    The proof follows immediately, as $f$ is the pointwise infimum of a family of linear functions (i.e. $\boldsymbol{v}^T X \boldsymbol{v}$). For more details, see \cite{boyd2004convex}.
\end{proof}

\begin{definition}
    We define the composite function $h = f \circ \textbf{H}: \mathbb{R}^m \rightarrow S^m \rightarrow \mathbb{R}$:
    \begin{equation}
        h(\boldsymbol{\epsilon}) = f(\textbf{H}^\lambda|_{\boldsymbol{\theta^*}+ \boldsymbol{\epsilon}})
    \end{equation}
\end{definition}

It is clear that the function $h$ is not concave since the Hessian operator is not in general affine in $\boldsymbol{\epsilon}$. However, from our previous analysis, we have assumed that for small perturbations $\lambda$, the shift vector that translates the system onto the new ground state is also small. This allows us to define the affine approximation of the Hessian.

\begin{definition}
    The affine approximation $\tilde{\textbf{H}}$ of $\textbf{H}$ is defined as:
    \begin{equation}
        \tilde{\textbf{H}} = \textbf{H}^{\lambda}|_{\boldsymbol{\theta^*}} + \sum_{k=1}^m \epsilon_k D_k|_{\boldsymbol{\theta^*}}
    \end{equation}
    where the matrix $D_k$ is defined as $D_k \equiv \frac{\partial \textbf{H}^\lambda}{\partial \theta_k}$.
\end{definition}

One can quantify the error of the affine approximation of the Hessian from the full matrix as described in Appendix \ref{appendix:error_analysis}.

\begin{lemma}
    The function $h = f\circ \tilde{\textbf{H}}$ defined as:
    \begin{equation}
        h(\boldsymbol{\epsilon}) = f\Big( \textbf{H}^{\lambda}|_{\boldsymbol{\theta^*}} + \sum_{k=1}^m \epsilon_k D_k|_{\boldsymbol{\theta^*}}\Big)
    \end{equation}
    is concave in $\mathbb{R}^m$.
\end{lemma}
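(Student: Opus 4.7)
The plan is to exploit the standard fact from convex analysis that composing a concave function with an affine map preserves concavity. The ingredients are already essentially in place: the previous Lemma \ref{lemma:minimum_eigenvalue_function} established that the minimum eigenvalue function $f:\mathcal{S}^m\to\mathbb{R}$ is concave, so what remains is to verify that the inner map $\tilde{\mathbf{H}}:\mathbb{R}^m\to\mathcal{S}^m$ is affine, and then to chain the two facts via Jensen's inequality.

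First I would observe that the map $\boldsymbol{\epsilon}\mapsto \tilde{\mathbf{H}}(\boldsymbol{\epsilon}) = \mathbf{H}^{\lambda}|_{\boldsymbol{\theta^*}} + \sum_{k=1}^m \epsilon_k D_k|_{\boldsymbol{\theta^*}}$ is affine by construction: the constant term $\mathbf{H}^{\lambda}|_{\boldsymbol{\theta^*}}$ is a fixed symmetric matrix in $\mathcal{S}^m$, and each $D_k|_{\boldsymbol{\theta^*}}$ is also a fixed symmetric matrix (being a partial derivative of the Hessian, symmetry is inherited). Hence the output lies in $\mathcal{S}^m$ and for any $\boldsymbol{\epsilon}_1,\boldsymbol{\epsilon}_2\in\mathbb{R}^m$ and $\theta\in[0,1]$ we have the identity
\begin{equation}
\tilde{\mathbf{H}}(\theta\boldsymbol{\epsilon}_1+(1-\theta)\boldsymbol{\epsilon}_2)=\theta\,\tilde{\mathbf{H}}(\boldsymbol{\epsilon}_1)+(1-\theta)\tilde{\mathbf{H}}(\boldsymbol{\epsilon}_2).
\end{equation}

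Next I would apply $f$ to both sides and invoke the concavity of $f$ established in Lemma \ref{lemma:minimum_eigenvalue_function}. The affine identity turns the left-hand side into $f$ evaluated at a convex combination of two points in $\mathcal{S}^m$, and Jensen's inequality for the concave function $f$ gives
\begin{equation}
h(\theta\boldsymbol{\epsilon}_1+(1-\theta)\boldsymbol{\epsilon}_2)\;\geq\; \theta\, f(\tilde{\mathbf{H}}(\boldsymbol{\epsilon}_1))+(1-\theta)\, f(\tilde{\mathbf{H}}(\boldsymbol{\epsilon}_2))=\theta\, h(\boldsymbol{\epsilon}_1)+(1-\theta)\, h(\boldsymbol{\epsilon}_2),
\end{equation}
which is precisely the concavity of $h$ on $\mathbb{R}^m$.

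Honestly, I do not expect any genuine obstacle: the argument is a one-line application of the ``concave composed with affine is concave'' principle, and the only thing to be careful about is making explicit that $\tilde{\mathbf{H}}$ really is affine (not merely linear) and that its image stays in the symmetric matrices so that $f$ can be applied. Once those observations are recorded, the proof reduces to chaining the two displayed equations above. If one wanted, one could equivalently argue via the epigraph/hypograph characterization of concavity, but Jensen's inequality is the most direct route and mirrors the style used in the proof of Lemma \ref{lemma:minimum_eigenvalue_function}.
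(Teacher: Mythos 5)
Your proof is correct, and it rests on the same essential fact as the paper's: the affineness of $\boldsymbol{\epsilon}\mapsto\tilde{\textbf{H}}(\boldsymbol{\epsilon})$. The paper phrases the argument as $h$ being the pointwise infimum of the affine functions $\boldsymbol{\epsilon}\mapsto\boldsymbol{v}^T\tilde{\textbf{H}}(\boldsymbol{\epsilon})\boldsymbol{v}$ over unit vectors $\boldsymbol{v}$, whereas you compose the concave minimum-eigenvalue function $f$ of Lemma \ref{lemma:minimum_eigenvalue_function} with the affine map and invoke Jensen's inequality; these are interchangeable formulations of the same standard argument.
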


\begin{proof}
    The proof follows similarly to Lemma \ref{lemma:minimum_eigenvalue_function}, as $h$ is the pointwise minimum of a family of affine functions.
\end{proof}

Our analysis has allowed us to express the problem as a semidefinite program. It is straightforward to see that if we used the $\kappa$-approximate null space defined in Definition 1 and use the fact that $\boldsymbol{\epsilon} = \boldsymbol{\epsilon_0}+\boldsymbol{\mu}$, where $\boldsymbol{\mu} \in \mathcal{N}_{\kappa}(A)$ then Eq. \eqref{eq:main_problem} can be reduced to its final form:
\begin{equation}
    \begin{gathered}
    \min_{\boldsymbol{\mu}} ||\boldsymbol{\epsilon_0} + \boldsymbol{\mu}|| \\
    \text{subject}\text{ to: } f\Big(\textbf{H}^\lambda +  \grad_{\boldsymbol{\mu}} \textbf{H}^\lambda\Big{|}_{\boldsymbol{\theta^*}+\boldsymbol{\epsilon_0}}\Big) \geq 0\\
    \boldsymbol{\mu} \in \mathcal{N}_{\kappa}(A)
    \label{eq:reformulated_problem2}
    \end{gathered}
\end{equation}
where $\grad_{\boldsymbol{\mu}} = \boldsymbol{\mu}^T \grad$ is the directional derivative pointing in the $\kappa$-approximate null space. As a result, in Lemma \ref{lemma:qresources}, we can conclude what are the essential quantum resources to formulate and solve the mathematical problem defined in Eq. \eqref{eq:main_problem}.

\begin{lemma}
    (Quantum Resources). The total number of quantum resources required at every step of the AQC-PQC algorithm scales as:
    \begin{equation*}
        \mathcal{O}\Big((1 + \dim(\mathcal{N}_{\kappa}(A))M^2\Big)
    \end{equation*}
    where $A$ is the hessian of the perturbed Hamiltonian at the ground state of the unperturbed Hamiltonian.
\label{lemma:qresources}
\end{lemma}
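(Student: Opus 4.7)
My approach is a per-iteration resource count organized around the three data objects that the semidefinite program in Eq.~\eqref{eq:reformulated_problem2} consumes: the linear-constraint pair $(\boldsymbol{Q}, A)$ at the unperturbed point $\boldsymbol{\theta^*}$, the constant term of the affine Hessian approximation evaluated at the shifted base point $\boldsymbol{\theta^*} + \boldsymbol{\epsilon_0}$, and the directional-derivative matrices of $\textbf{H}^\lambda$ along a basis of $\mathcal{N}_{\kappa}(A)$. Each contribution is counted in terms of the number of distinct quantum circuits that must be prepared and measured, with derivatives extracted via the parameter-shift formulas of Lemma~\ref{th:main2}.

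The first two contributions are fixed and independent of the null-space dimension. By Lemma~\ref{th:main2}, the $M$ entries of $\boldsymbol{Q}$ cost $O(M)$ circuits and the $O(M^2)$ entries of $A = \textbf{H}^\lambda|_{\boldsymbol{\theta^*}}$ cost $O(M^2)$ circuits, giving a combined setup cost of $O(M^2)$. After solving the unconstrained minimum-norm system of Eq.~\eqref{eq:experiments} to obtain $\boldsymbol{\epsilon_0}$ classically, one evaluates the full Hessian $\textbf{H}^\lambda|_{\boldsymbol{\theta^*} + \boldsymbol{\epsilon_0}}$ \emph{directly} at the shifted base point via Eq.~\eqref{eq:par-hessian}; this contributes another $O(M^2)$ state preparations and supplies the constant term of $\tilde{\textbf{H}}(\boldsymbol{\mu})$. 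Computing this matrix directly at $\boldsymbol{\theta^*} + \boldsymbol{\epsilon_0}$ is what avoids the naive $O(M^3)$ cost of reconstructing it through $\textbf{H}^\lambda|_{\boldsymbol{\theta^*}} + \sum_{k=1}^{M} \epsilon_{0,k} D_k|_{\boldsymbol{\theta^*}}$ with all $M$ coordinate derivative matrices $D_k$.

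The dependence on $\dim(\mathcal{N}_{\kappa}(A))$ enters solely through the linear-in-$\boldsymbol{\mu}$ part of $\tilde{\textbf{H}}$. Because $\boldsymbol{\mu}$ is constrained to $\mathcal{N}_{\kappa}(A) = \mathrm{span}(\boldsymbol{v_{k+1}}, \ldots, \boldsymbol{v_M})$, the linear map $\boldsymbol{\mu} \mapsto \nabla_{\boldsymbol{\mu}} \textbf{H}^\lambda$ is fully specified by the $\dim(\mathcal{N}_{\kappa}(A))$ directional-derivative matrices $\boldsymbol{v_i}^T \nabla \textbf{H}^\lambda|_{\boldsymbol{\theta^*} + \boldsymbol{\epsilon_0}}$, one per null-space basis vector, rather than by all $M$ coordinate-axis derivatives. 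Each such matrix is obtained by a finite-difference (or triple parameter-shift) sweep along the prechosen direction $\boldsymbol{v_i}$: the two base points $\boldsymbol{\theta^*} + \boldsymbol{\epsilon_0} \pm h \boldsymbol{v_i}$ each require an independent $O(M^2)$ Hessian evaluation by Eq.~\eqref{eq:par-hessian}, so the per-direction cost is $O(M^2)$ and the total contribution of this third stage is $\dim(\mathcal{N}_{\kappa}(A)) \cdot O(M^2)$.

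Summing the three stages yields $O\big((1 + \dim(\mathcal{N}_{\kappa}(A))) M^2\big)$, which is the claimed bound. The step I expect to need the most care to articulate cleanly is the observation that a single directional derivative of the Hessian costs only $O(M^2)$ rather than $O(M^3)$: once the direction $\boldsymbol{v_i}$ is fixed classically, shifting the full parameter vector by $h\boldsymbol{v_i}$ selects a single new circuit instance, inside which the ordinary pair-shift bookkeeping of Eq.~\eqref{eq:par-hessian} produces all $O(M^2)$ Hessian entries at that point; the directional step therefore does not multiply the intrinsic $O(M^2)$ sweep cost. Everything else reduces to summing the three independent batches of parameter-shifted circuits described above.
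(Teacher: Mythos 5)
Your proof is correct and follows essentially the same counting the paper relies on (its own proof is just a pointer to the preceding analysis): $O(M)$ circuits for $\boldsymbol{Q}$, $O(M^2)$ for the Hessian data defining the constant part of the affine approximation, and one directional Hessian derivative per basis vector of $\mathcal{N}_{\kappa}(A)$ at $O(M^2)$ circuits each, giving $\mathcal{O}\big((1+\dim(\mathcal{N}_{\kappa}(A)))M^2\big)$. Your explicit remark that a directional derivative along a fixed null-space direction needs only two full $O(M^2)$ Hessian sweeps—rather than assembling all $M$ coordinate matrices $D_k$ at $O(M^3)$ cost—is exactly the point the paper leaves implicit, so you have simply spelled out the intended argument.
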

\begin{proof}
    The proof follows immediately from our previous analysis.
\end{proof}

Finally, once the problem has been formulated, classical algorithms, such as interior point methods \cite{boyd2004convex}, or supergradient ascent methods \cite{boyd2003subgradient} can be utilized to return the solution. 


\section{Simulated Experiments}
\label{sec:experiments}
 We will first give an overview of the technical details of the experiments and we will also introduce the different classes of problems that we examined. Then, we will provide an analysis of our method for different choices of discretization steps.

\subsection{Technical Details}
For our experiments, we used both \emph{Qiskit Statevector} and QuEST simulators which allow exact \emph{noiseless} calculation of the expectation values.

For the parameterized family of gates, we chose a \emph{hardware-efficient ansatz} consisting of a layer of $R_y$ rotations for each qubit, followed by a series of controlled-$Z$ operations applied in a nearest-neighbor fashion, and then finally another layer of $R_y$ rotations (see Figure \ref{fig:Ansatz_max_cut}). In order to evaluate the efficiency of our algorithm we used Eq. \eqref{eq:experiments}. Then, to test the condition that the Hessian at the point $\boldsymbol{\theta^*} + \boldsymbol{\epsilon}$ is positive semidefinite, we prepared the state $\ket{\psi(\boldsymbol{\theta^*} + \boldsymbol{\epsilon})}$ and calculated the Hessian using Eq. \eqref{eq:par-hessian}. Below we analyze the different problems that we evaluated our method.

\begin{figure}
    \centering
    \includegraphics[width=0.48\textwidth]{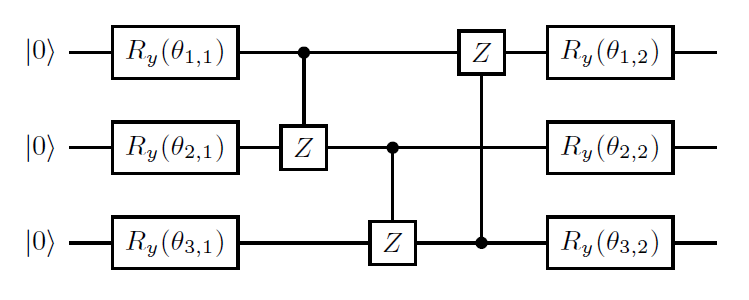}
    \caption{Parameterized family of gates $U(\boldsymbol{\theta})$, for three qubits, used for our simulated experiments. First, a layer of $R_y$ rotations is applied on each qubit, followed by a series of controlled-$Z$ operations applied in a nearest-neighbor fashion. Finally, another layer of $R_y$ rotations is applied on each qubit.}
    \label{fig:Ansatz_max_cut}
\end{figure}

\subsection{Classical Combinatorial Optimization Problems}
\label{subsec:classical_opt_problems}
\subsubsection{MaxCut}

The first problem that we considered is the \emph{MaxCut} problem. MaxCut is the most studied combinatorial optimization problem in the context of variational quantum algorithms literature \cite{wang2018quantum} due to its simple mapping to a Hamiltonian and its importance as a classical problem. The problem is defined as follows.

Let $G(V,E)$ be a non-directed $n$-vertex graph, where $V$ is the set of vertices, $E$ is the set of edges, and $w_{ij}$ are the weights of the edges. A \emph{cut} is defined as a bipartition of the set $V$ into two disjoint subsets $P,Q$, i.e. $P\cup Q = V$ and $ P\cap Q=\emptyset$. Equivalently, we label every vertex with either $0$ or $1$, where it is understood that the vertex belongs to set $P$ if it takes the value $0$ and to set $Q$ if it takes the value $1$. The aim is to maximize the cost function
 \begin{equation}
 C(\boldsymbol{x}) = \sum_{i,j = 1}^n w_{ij}x_i\left(1-x_j\right).
 \end{equation}
 Intuitively, this corresponds to finding a partition of the vertices into two disjoint sets that ``cuts'' the maximum number of edges. By transforming the binary variables $x_i$ to spin variables $z_i$ according to $x_i = \frac{1-z_i}{2}$, the cost function is mapped into a quantum spin-configuration problem,
 \begin{equation}
     C(\boldsymbol{z}) = \sum_{\left<i,j\right>\in E} \frac{w_{ij}}{2}\left(1-z_iz_j\right).
 \end{equation}
Maximizing the cost function above corresponds to finding the ground state of the Hamiltonian
\begin{equation}
	H_{\textrm{MC}} = -\sum_{\left<i,j\right>\in E}\frac{w_{ij}}{2}\left( 1- \sigma_{i}^z \sigma_j^z \right).
 \label{eq:maxcut_hamiltonian}
\end{equation}

\subsubsection{Number Partitioning}

 The second problem is the \emph{Number Partitioning} problem. Number Partitioning is a classical combinatorial optimization problem that has previously been tackled by VQAs \cite{nannicini2019performance, barkoutsos2020improving} and is stated as follows. We are given a set of $N$ integers $\{n_1, n_2, \ldots, n_N\}$ and we are asked to decide whether there exists a partition of the set into two disjoint subsets $S, \bar{S}$ so that the sums of the elements on each subset are equal. The above problem can be cast as an optimization problem with a cost function:
\begin{equation}
    C(\boldsymbol{x}) = \left(\sum_{i=1}^N (2x_i - 1)n_i\right)^2
\end{equation}
which can be readily formulated as a quantum spin problem interacting with Hamiltonian:
\begin{equation}
    H_{\textrm{NP}} = \sum_{i\neq j}(n_i n_j)\sigma_i \sigma_j + \sum_{i=1}^N n_i^2.
\end{equation}

\subsection{Transverse-Field Ising Chain}

The Transverse-Field Ising Chain (TFI) describes a quantum system that interacts under a Hamiltonian with extra off-diagonal terms. These types of Hamiltonians, compared to classical optimization Hamiltonians, have eigenvectors that do not correspond to the computational basis vectors. The performance of the TFI chain model has previously been investigated in \cite{wiersema2020exploring, meyer2023exploiting} using VQAs. The Hamiltonian describing the TFI chain model (with periodic boundary conditions) is:

\begin{equation}
    H_{\text{TFI}} = -\sum_{k=1}^n J_k \sigma^z_k \sigma^z_{k+1} - h \sum_{k=1}^n \sigma^x_k
\end{equation}
where ($J_k, h$) are coupling coefficients.

\subsection{Method Performance}
\subsubsection{Classical Optimization}

We start our analysis by presenting the results of AQC-PQC on MaxCut. We tested our method on 3-regular unweighted graphs of sizes 6 and 8 and performed 25 simulations for each size. For every instance, we discretized our algorithm into $K$ finite steps ranging from 2 to 26 and tested the performance of our strategy for each choice of steps. As a figure of merit, we chose the probability of sampling the optimal solution at the end of the algorithm as it is a well-suited metric for classical optimization problems (see \ref{subsec:metrics}).

In Figure \ref{fig:3-reg-graphs} we can visualize the average performance of our algorithm for different choices of discrete steps. For every choice of steps, we calculate the average performance of the instances with 6 qubits (orange line with pentagon marker) and with 8 qubits (blue line with diamond marker). It is clear that our algorithm performs quite well in the ideal case (noiseless executions of the quantum circuits and exact calculation of expectation values). Specifically, we observe a critical number of steps $S_C$ after which any choice of steps $S\geq S_C$ will return the optimal solution with probability one. 

\begin{figure}
\begin{tikzpicture}
\node (img) {\includegraphics[scale=0.35]{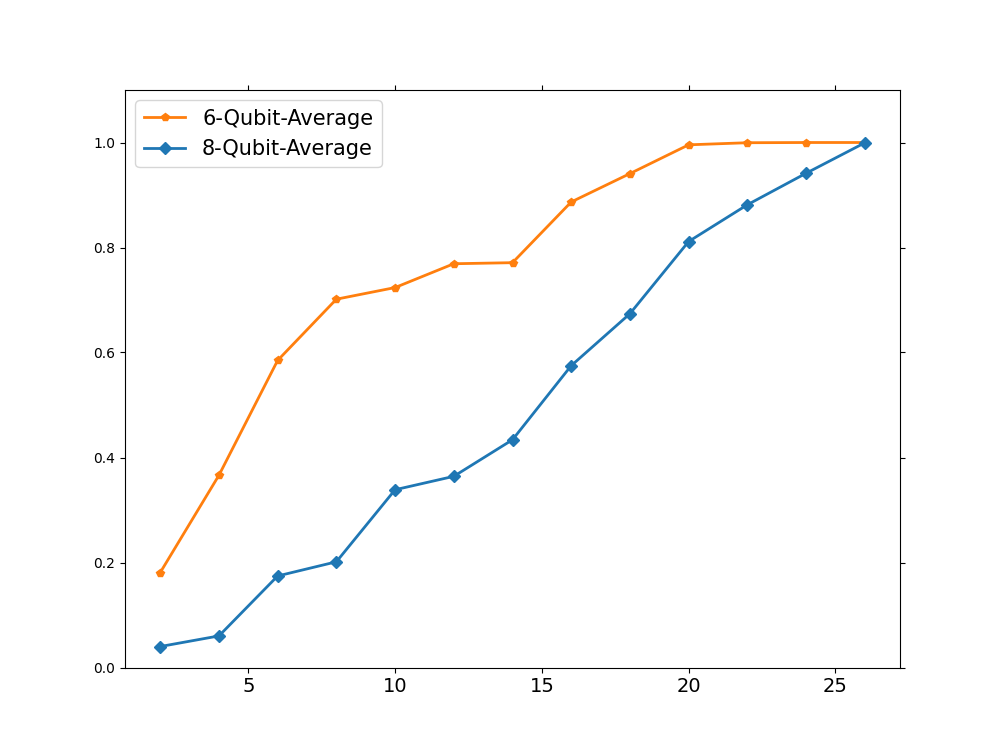}};
\node[below=of img,yshift=0.5cm, node distance=0cm, yshift=1.1cm] {\scriptsize Steps};
\node[left=of img, xshift=1cm, node distance=0cm, rotate=90, anchor=center,yshift=-0.7cm] {\scriptsize Probability of Optimal Solution};
\end{tikzpicture}
\caption{MaxCut. Average probability of optimal solution as a function of the choice of the discrete steps. For both sizes, the average probability increases with the number of steps, reaching over $80\%$ when the choice is 20 steps.}
\label{fig:3-reg-graphs}
\end{figure}

The results of the Number Partitioning problem are similar to those of MaxCut. However, AQC-PQC does not always converge to a quantum state with an overlap close to one. The reason is that for the Number Partitioning problem, a more expressive ansatz is needed so that at the intermediate timesteps, the system remains close to the ground state energy. We discuss this thoroughly in Section \ref{sec:comparisons} and in Appendix \ref{appendix:ansatz_expressiveness}.

\subsubsection{Transverse-Field Ising Chain}
\label{subsection:transverse_ising}

For our simulated experiments, we consider the random Transverse-Ising Chain Hamiltonian with periodic boundary conditions. We draw the coupling strengths $(J_k, h)$ at random from a random uniform distribution and we calculate the approximation ratio $\bra{\psi(\theta)}H_{\text{TFI}}\ket{\psi(\theta)}/E_{\text{exact}}$ for each choice of discretization steps at the end of the algorithm. For our numerical calculations, we used instances with sizes of 6,7 and 8 qubits, and discretization steps from 2 up to 20 steps. The results are illustrated in Figure \ref{fig:tfim}. We can see that for all instances the performance is similar. Specifically, the method can achieve near-optimal approximation ratios for a small number of steps. As the number of steps increases, the method reaches $0.95\%$ approximation ratio for all sizes and instances. 

\begin{figure}
\begin{tikzpicture}
\node (img)  {\includegraphics[scale=0.35]{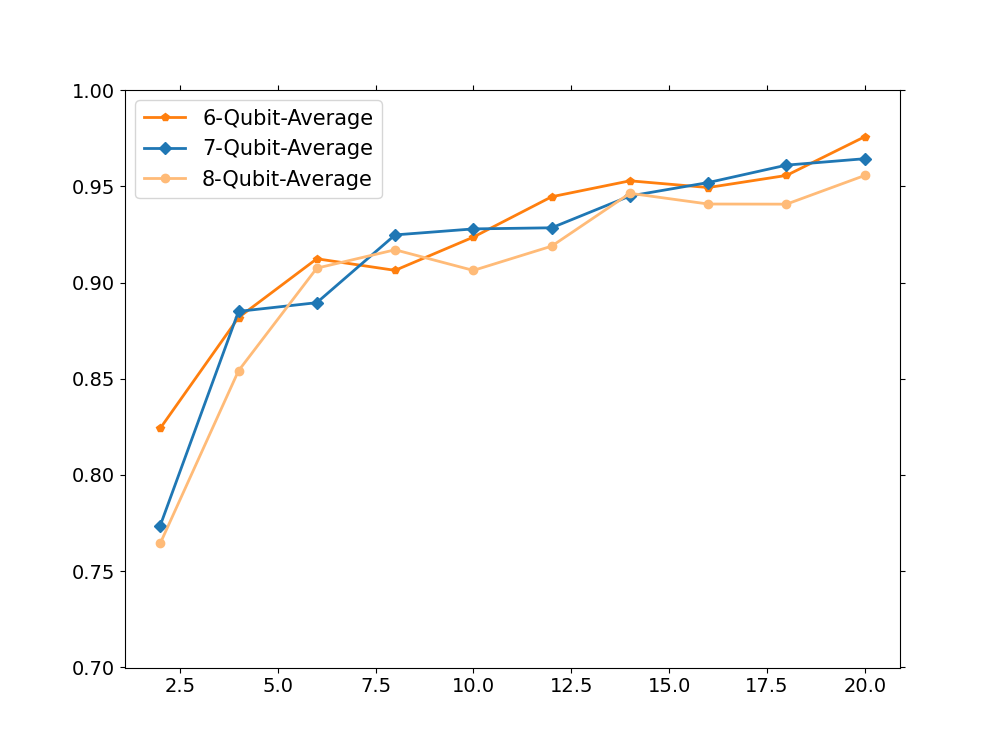}};
\node[below=of img,yshift=0.5cm, node distance=0cm, yshift=1.1cm] {\scriptsize Steps};
\node[left=of img, xshift=0.6cm, node distance=0cm, rotate=90, anchor=center,yshift=-0.7cm] {\scriptsize $\bra{\psi(\theta)}H_{\text{TFI}}\ket{\psi(\theta)}/E_{\text{exact}}$};
\end{tikzpicture}
\caption{Transverse-Field Ising Chains. Average approximation ratio as a function of the choice of discrete steps. Both instance sizes have similar performance achieving near-optimal approximation ratios.}
\label{fig:tfim}
\end{figure}

\section{Method Evaluation}
\label{sec:comparisons}
\subsection{Metrics and Technical Details}
\label{subsec:metrics}
In Section \ref{sec:experiments} we analyzed the performance of the AQC-PQC algorithm for two different problems with respect to the choice of discretization steps. However, it is important to examine how our method compares to VQE. As a figure of merit, we will use two different metrics. The first one is the distance from the optimal energy, $d = |E - E_{\text{opt}}|$, which is a good indicator of the quality of the output solution as it describes how far (in energy units) we are from the ground state energy. This metric is used both for the classical optimization problems as well as the quantum spin-configuration problem.

As we previously discussed, for classical combinatorial optimization problems, the ground state corresponds to a computational basis state. Thus, as a second figure of merit, we will use the overlap with the optimal solution which describes the probability of sampling the best solution. Specifically, let $\ket{\psi(\boldsymbol{\theta})}$ be a parameterized quantum state. The overlap of the state $\ket{\psi(\boldsymbol{\theta})}$ with a $d$-degenarate ground state is defined as:
\begin{equation}
    \sum_{i=1}^d \bra{\psi(\boldsymbol{\theta})}\ket{\psi_{\text{opt},i}}
\end{equation}
where $\ket{\psi_{\text{opt},i}}$ is the $i$-th optimal solution.

 Another thing to note is that AQC-PQC can be considered a \emph{second-order method} as it exploits information of both the gradients and the Hessian. To make a fair comparison, the classical optimization part in VQE will also make use of this information. Specifically, for the classical optimization part in VQE, we used both Gradient Descent (first-order optimization method) and 2-SPSA (second-order optimization method) with 10 random initialization points (in which we kept the best output out of 10). We evaluated our method in two different classical optimization problems and one quantum spin-configuration problem using the ansatz family seen in Fig \ref{fig:Ansatz_max_cut}. For the classical optimization problems, the ansatz family includes both the ground state of the initial Hamiltonian (which again is chosen to be the uniform superposition of all states) and the final Hamiltonian which is an (unknown) computational basis state while for the quantum spin-configuration problem, the ground state is not always included within the ansatz. 

 For AQQ-PQC, we discretized the Hamiltonian using 100 steps for all problems and instance sizes. Increasing the number of steps further would improve the performance of our method (as we discussed in Sec \ref{sec:experiments}), but we would like to state that even with a suboptimal choice of discretization steps, our method outperforms VQE even for a modest choice of ansatz family.

\subsection{Results}
\subsubsection{Classical Optimization Problems}

The two methods (VQE and AQC-PQC) were tested on the MaxCut and the Number Partitioning problems (details can be found in \ref{subsec:classical_opt_problems}). For these problems, we chose to compare the methods on instance classes that we consider hard. Both of these problems have an intrinsic $\mathbb{Z}_2$ symmetry and so we chose instances with only two optimal solutions (one solution can be acquired from the other by flipping all qubits). Specifically, for the MaxCut problem we sampled 100 \emph{random weighted graphs} of sizes 8 to 12 while for the Number Partitioning problem, we sampled 100 instances of the same size as MaxCut with integers drawn from the interval $[0, 50]$. The results are illustrated in Figure \ref{fig:performance_aqcpqc}. Additionally, we can see the overlap returned by both algorithms in Tables \ref{maxcut_table},\ref{number_partitioning_table}. 

Overall, we can see that AQC-PQC is able to outperform VQE in all instances, achieving overlap even five times larger in MaxCut (see Table \ref{maxcut_table}) and ten times larger in Number Partitioning (see Table \ref{number_partitioning_table}). Moreover, as seen in Figure \ref{fig:performance_aqcpqc}, the output states returned by AQC-PQC are significantly closer (in terms of energy) to the ground state compared to VQE. This is to be expected as the non-convexity of the cost landscape results in the classical optimization part of VQE to stuck in a local minimum. As pointed out in \cite{anschuetz2022quantum, you2021exponentially} the cost landscapes in shallow-depth VQAs, such as those utilized in this paper, are filled with a vast amount of local minima which makes them untrainable. One potential way out of this is to overparametrize the ansatz family \cite{larocca2023theory}, provided that the ansatz family can be overparametrized with a polynomial number of parameters. However, for NISQ devices, the large depth would result in noisy calculations due to the large number of errors and low coherence times.

On the other hand, AQC-PQC provides a more robust strategy to navigate the (time-evolving) landscape. Provided that the number of steps is chosen accordingly and the ansatz family is expressive enough, the latter algorithm will always achieve a large overlap with the optimal solution. 

However, it is important to stress that the expressiveness of the ansatz plays a significant role in the performance of the algorithm. We have observed that in the limit of very large steps, if the ansatz family is not sufficiently expressive, AQC-PQC will converge into a suboptimal solution. The reason is that for the intermediate ground states, circuits of large depth are required in order to remain close to the instantaneous ground state. In Appendix \ref{appendix:ansatz_expressiveness} we provide a case study in which we investigate the performance of AQC-PQC for different ansatz families.

\begin{figure*}[]
\begin{tikzpicture}
\node (img1)  {\includegraphics[scale=0.55]{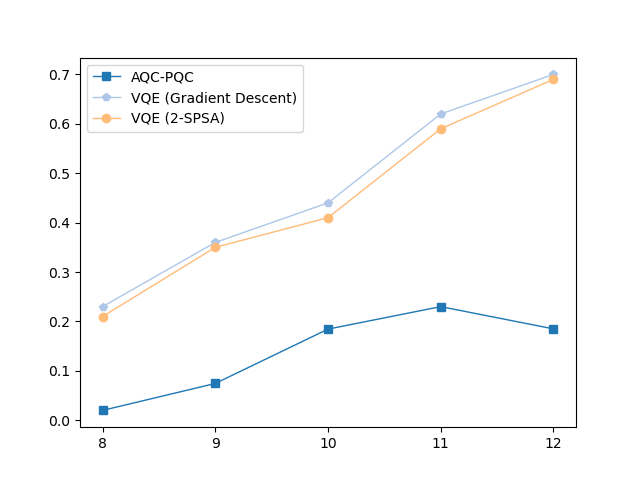}};
\node[below=of img1, node distance=0cm, yshift=1.5cm] (text1) {\scriptsize Number of Qubits};
\node[above=of img1, node distance=0cm, yshift=-1.8cm] {\textbf{MaxCut}};
\node[left=of img1, xshift=0.7cm, node distance=0cm, rotate=90, anchor=center,yshift=-0.7cm] {\scriptsize $|E-E_{opt}|$};
\node[right=of img1, xshift=-1cm] (img2)  {\includegraphics[scale=0.55]{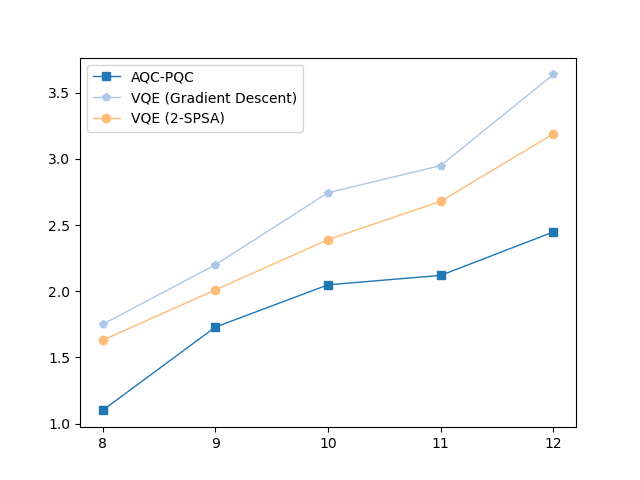}};
\node[below=of img2, node distance=0cm, yshift=1.5cm] {\scriptsize Number of Qubits};
\node[above=of img2, node distance=0cm, yshift=-1.8cm] {\textbf{Number Partitioning}};
\node[left=of img2, xshift=0.7cm, node distance=0cm, rotate=90, anchor=center, yshift=-0.7cm] {\scriptsize $|E-E_{opt}|$};
\end{tikzpicture}
\caption{Performance of AQC-PQC algorithm compared to VQE (with 2-SPSA and Gradient Descent optimizers) for the MaxCut problem (left) and the Number Partitioning problem (right). The AQC-PQC algorithm with the dark blue line (square markers) is able to outperform both 2-SPSA and Gradient Descent on the quality of the output solution.}
\label{fig:performance_aqcpqc}
\end{figure*}

\begin{table*}
\begin{center}
\begin{tabular}{||c||c|c|c|c|c|c||}
    \hline
     \textbf{MaxCut} &\multicolumn{6}{c||}{Optimal Solution Overlap ($\%$)}\\
     \hline
      & \textbf{7 Qubits} & \textbf{8 Qubits} & \textbf{9 Qubits} & \textbf{10 Qubits} & \textbf{11 Qubits} & \textbf{12 Qubits}\\
     \hline
      AQC-PQC & 82.7 & 74.3 &93.1 & 50 & 28.1 & 56.6\\
     \hline 
      VQE& 62.3& 54.7& 60.8 & 39.2& 22.1& 11.1\\
     \hline
\end{tabular}
\caption{Probability of sampling the optimal solution for the MaxCut problem for instances of size 7-12. AQC-PQC was able to outperform VQE on all instance sizes achieving a larger overlap with the optimal solution.}
\label{maxcut_table}
\end{center}
\end{table*}

\begin{table*}
\begin{center}
\begin{tabular}{||c||c|c|c|c|c|c||}
    \hline
     \textbf{Number Partitioning} &\multicolumn{6}{c||}{Optimal Solution Overlap ($\%$)}\\
     \hline
      & \textbf{7 Qubits} & \textbf{8 Qubits} & \textbf{9 Qubits} & \textbf{10 Qubits} & \textbf{11 Qubits} & \textbf{12 Qubits}\\
     \hline
      AQC-PQC & 37.5 & 21.9 & 24.7 & 12.6 & 5 & 4.6\\
     \hline 
      VQE& 28.5& 6.2& 6.4 & 1.2& 0.8& 0.4\\
     \hline
\end{tabular}
\caption{Probability of sampling the optimal solution for the Number Partitioning problem for instances of size 7-12. AQC-PQC was able to achieve a significantly larger overlap than VQE for all instance sizes.}
\label{number_partitioning_table}
\end{center}
\end{table*}


\subsubsection{Transverse-Field Ising Chain}

Details for the TFI Chain problem can be found in \ref{subsection:transverse_ising}. We compared the two methods on instances of sizes 8 to 12 qubits. We evaluated the performance of the two algorithms on TFI Chain for 100 random instances with the couplings $(J_k, h)$ drawn uniformly at random from the uniform distribution. The results of the TFI Chain model are illustrated in Figure \ref{fig:results_tfim}. Overall, we observe that AQC-PQC is able to return approximations of the ground state of $H_{\text{TFI}}$ that are closer compared to those returned by VQE.

\begin{figure}
\begin{tikzpicture}
\node (img)  {\includegraphics[scale=0.55]{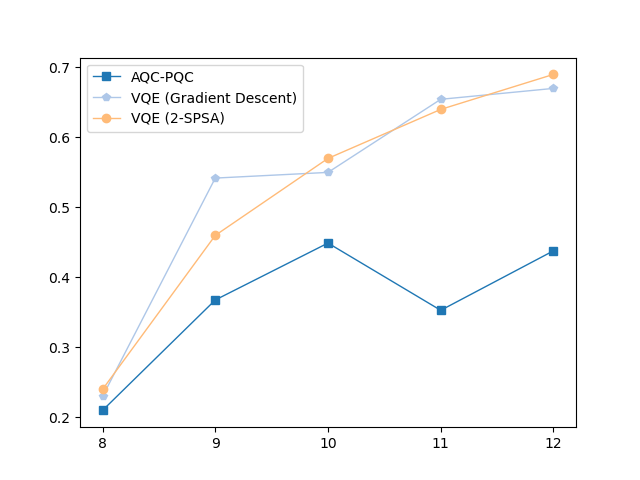}};
\node[below=of img, node distance=0cm, yshift=1.5cm] {\scriptsize Number of Qubits};
\node[left=of img, node distance=0cm, xshift=0.7cm, rotate=90, anchor=center,yshift=-0.7cm] {\scriptsize $|E-E_{opt}|$};
\end{tikzpicture}
\caption{Performance of AQC-PQC algorithm compared to VQE (with 2-SPSA and Gradient Descent optimizers) for the Transverse-Field Ising Chain model.}
\label{fig:results_tfim}
\end{figure}





\section{Discussion}
\label{sec:conclusion}

\subsection{Summary}

We introduced a new hybrid quantum-classical method that can be used to approximate adiabatic quantum computing (and thus universal quantum computing) using a parameterized family of quantum circuits suitable for early fault-tolerant devices. 

Our main mathematical result (Theorem \ref{th:main}) considered an ``initial'' Hamiltonian $H_s$, a final Hamiltonian $H_1$, a small parameter $\lambda\ll 1$, a family of quantum states $\ket{\psi(\ve{\theta})}=U(\ve{\theta})\ket{0}$ forming a parameterized quantum circuit (PQC), and the parameters $\ve{\theta^*}$ that give the state of the family with smallest initial energy. We then derived the shift vector $\ve{\epsilon}$ that the parameters need to be shifted by, to find the ground state of the perturbed Hamiltonian $H_s+\lambda V$. Importantly, we found that the shift vector $\ve{\epsilon}$ can be calculated by solving a linear system of equations which is derived using the expectations of some observables (energies and higher-order derivatives) estimated for the already known ground state of $H_s$.

This gives a new way to ``discretize'' adiabatic quantum computing. We choose the number of steps to be $K$, and starting from $H_0$ we (iteratively) perturb the Hamiltonian by $1/K(H_1-H_0)$ until it reaches the desired Hamiltonian $H_1$. We start with known parameters/state, and in each step, we estimate the expectations of the said observables at that point and evaluate the shift vector using Theorem \ref{th:main} to obtain the ground state of the next step. 

As a proof of principle, we checked that our method works for small instances of up to 12 qubits. We numerically tested our method (overlap with the optimal solution and distance from the optimal energy) on two different classical optimization problems, namely MaxCut and Number Partitioning, and also in a quantum-spin problem, the Transverse-Field Ising Chain model. We tested our algorithm on small instances of sizes 6 and 8 qubits and observed that in both cases, there exist a (small) critical number of steps, after which any choice will return the optimal solution with probability one. 

Moreover, we compared our method to the Variational Quantum Eigensolver (VQE) using two different classical optimization algorithms and 10 random initialization points. We observed that our method was able to outperform VQE both in terms of overlap with the optimal solution for the classical optimization problems but also in the distance from the optimal energy in all problems. The reason is that VQE is highly sensitive to the initialization points and the non-convexity of the cost landscape. As a result, a local optimization algorithm that follows the gradient will converge to the nearest local minimum provided that the learning step is sufficiently small. This means that if VQE is initialized near a local minimum then the output of the algorithm will be a sub-optimal solution. However, our method provides a more robust strategy in which we follow the optimal trajectory in the (evolving) landscape and reach (at the majority of times) the optimal solution provided a sufficient number of discretization steps and an expressive ansatz family.

\subsection{Evaluation, Comparisons and Future Research}

This approach opens many possibilities but also requires deeper theoretical and numerical work to fully understand its performance. The motivation of this paper was to introduce the technique and establish that it is working as expected and outline the potential it offers. Here we give some remarks on how we expect it to compare with other NISQ/early-fault-tolerant approaches (VQE, VAQC).

\noindent $\bullet$ Our simulated experiments involved small (but hard) instances of sizes up to 12 qubits in which VQE underperformed compared to AQC-PQC. We expect the difference in the performance of both algorithms to be significantly larger as the size of instances increases. However, we need to state that AQC-PQC would also require more expressive ansatz families and a larger number of steps to attain the same performance.

\noindent $\bullet$ In terms of resources, the comparison is not direct but we can note some advantages. Our cost -- the number of steps $K$, since we need to estimate a fixed number of expectations on specific states per step -- is well defined and essentially depends on the spectral gap (and the path from $H_0$ to $H_1$ that we choose). In VQE the running time (and number of states that one needs to prepare) depends on the number of iterations to convergence. This number can in practice only be estimated, and it is not clear how it scales with larger instances. Moreover, in small instances, one requires more iterations for the optimization than steps in our adiabatic evolution. At the same time, the classical problem we solve (the constrained set of linear equations) can be very efficiently solved with classical solvers, with guaranteed performance and good scalability.

\noindent $\bullet$ Perhaps the most important advantage is that the success of VQE and VAQC depends on the probability of converging to an actual global minimum. This inherits the challenges that classical optimization faces and is something that we avoid in our approach. For example, it is known that VQAs are sensitive to the initialization parameters. A bad initialization point may lead to false convergence, and as the size of problems increases, to phenomena like \emph{barren plateaux}, where the cost landscape is almost flat and exponentially many resources are required to train the parameters. These problems, however, do not persist in our case since we no longer initialize the parameters at random but at the ground state of the initial Hamiltonian and we do not perform any energy minimization at any step. Moreover, the cost landscapes of VQAs are filled with a vast number of local minima. Classical optimizers are known to struggle to find a near-optimal approximation in these non-convex energy landscapes. In our approach, we analytically calculate the optimal angles and essentially start with the global minimum. We then follow the path of that minimum as the Hamiltonian evolves and, provided the evolution is slow enough and the ansatz family expressive enough, we are \emph{guaranteed} to find a good approximation to the ground state. 

\noindent $\bullet$ Another potential advantage of our approach is that all the quantum states we prepare are close to the (time-evolved) ground state. There exist quantum error mitigation techniques that work better when the state/expectation value that needs to be recovered is a ground state \cite{bennewitz2022neural}. In contrast, in VQAs the quantum states prepared before the convergence are not close to being ground states, so VQA may benefit less from certain quantum error-correcting methods.

We can see (at least) three direct future directions:

\begin{enumerate}

    \item The time required for AQC is related to the number of steps $K$. While we provided evidence for the practicality of our method, an important task is to analyze $K$ extensively (both theoretically and using extensive numerical simulations). There are cases (ansatz families) where the first excited state may not necessarily correspond to a minimum and so the algorithm can acquire a speedup.
    
    \item Testing the method for different problems and (much) larger instances is the other obvious direction. This method, if combined with efficient classical solvers, should be suitable for running emulations with many qubits, and we expect to be able to see improvements in performance, compared to other variational quantum approaches for certain problems, already with $12-20$ qubits. This can include using knowledge from AQC (for example finding problems that require shorter time in AQC should be more promising, while another approach would be to consider other paths between $H_0$ and $H_1$ since for specific problems is known to give considerable advantages  \cite{roland2002quantum}).
    \item To address the issue of noise. Any real device is susceptible to imperfections. Our method will also be affected by noise. An obvious effect is that the expectation values used to compute the shift vector will be noisy, leading potentially to a wrong shift. Similarly, the accuracy with which one can tune the parameters $\ve{\theta}$ using a real physical device also affects the accuracy of our approach. This is analogous to ``noise-induced barren plateaux''~\cite{wang2021noise} since the noise essentially blurs the direction of the path that the parameters follow. While we anticipate the existence of such effects, it is important to explore the exact robustness of the method to imperfections, as well as the applicability and use of quantum error mitigation techniques in this context, with potential advantages compared to VQAs.
\end{enumerate}

\section*{Code Availability}
You can find a Python implementation of AQC-PQC using \href{https://github.com/ioankolot/AQC-PQC.git}{Qiskit} and a C++ implementation based on QuEST \cite{quest} as part of a \href{https://github.com/Milos9304/FastVQA}{FastVQA} library specifically designed for high-performance computing emulations.

\section*{Acknowledgements}

We thank Erika Andersson for useful discussions and comments. The authors acknowledge support from EPCC, including use of the NEXTGenIO system, which was funded
by the European Union’s Horizon 2020 Research and
Innovation programme under Grant Agreement no.
671951. IK and PW acknowledge partial funding from the ISCF grant 10001712, MP acknowledges support
by EPSRC DTP studentship grant EP/T517811/1 and PW acknowledges further support by EPSRC grants EP/T001062/1, EP/X026167/1 and EP/T026715/1, STFC grant ST/W006537/1 and Edinburgh-Rice Strategic Collaboration Awards. 

\bibliographystyle{unsrt}
\bibliography{References}

\onecolumngrid

\appendix

\section{Ansatz Expressiveness}
\label{appendix:ansatz_expressiveness}
It is very important to understand how crucial it is to have a parameterized family of gates that is sufficiently expressive. To be exact, the ansatz family should be expressive enough so that in the vicinity of small energy gaps, the energy returned by AQC-PQC is close to the ground state. In Figure \ref{fig:algorithm performance} we can visualize how close are the ground state energies returned by AQC-PQC at every step of the algorithm for different ansatz families.

Specifically, Figure \ref{fig:algorithm performance} illustrates a 3-regular graph of size 6 for the MaxCut problem. The system was initialized at the ground state of $H_0 = -\sum_i\sigma_i^x$ (with ground state $\ket{+}^{\otimes 6}$) and the Hamiltonian was discretized into 30 steps:
\begin{equation}
    H_k = \left(1-\frac{k}{30}\right)H_0 + \frac{k}{30}H_{\textrm{MC}}
\end{equation}
where $k\in [30]$ and $H_{\textrm{MC}}$ is the MaxCut Hamiltonian defined in Eq. \eqref{eq:maxcut_hamiltonian}. As a parameterized family of gates, we used the ansatz of Figure \ref{fig:Ansatz_max_cut} with 0 (no-entanglement gates), 1,2, and 3 layers of entanglement gates. Despite the fact that all these parameterized families contain both the initial and final ground states, they differ in the reachability of the intermediate ground states. We can see in Figure \ref{fig:algorithm performance} how the most expressive ansatz family (of 3 layers) is able to always remain close to the true ground state energy compared to the ansatz family which uses no entanglement. Although the latter achieves a non-zero overlap with the optimal solution, the final output energy is far from the optimal. As a result, for harder instances where the returned energy is larger than the first excited energy, there is a high probability that the resulting overlap will tend to zero.

\begin{figure}
\begin{tikzpicture}
\node (img) {\includegraphics[scale=0.38]{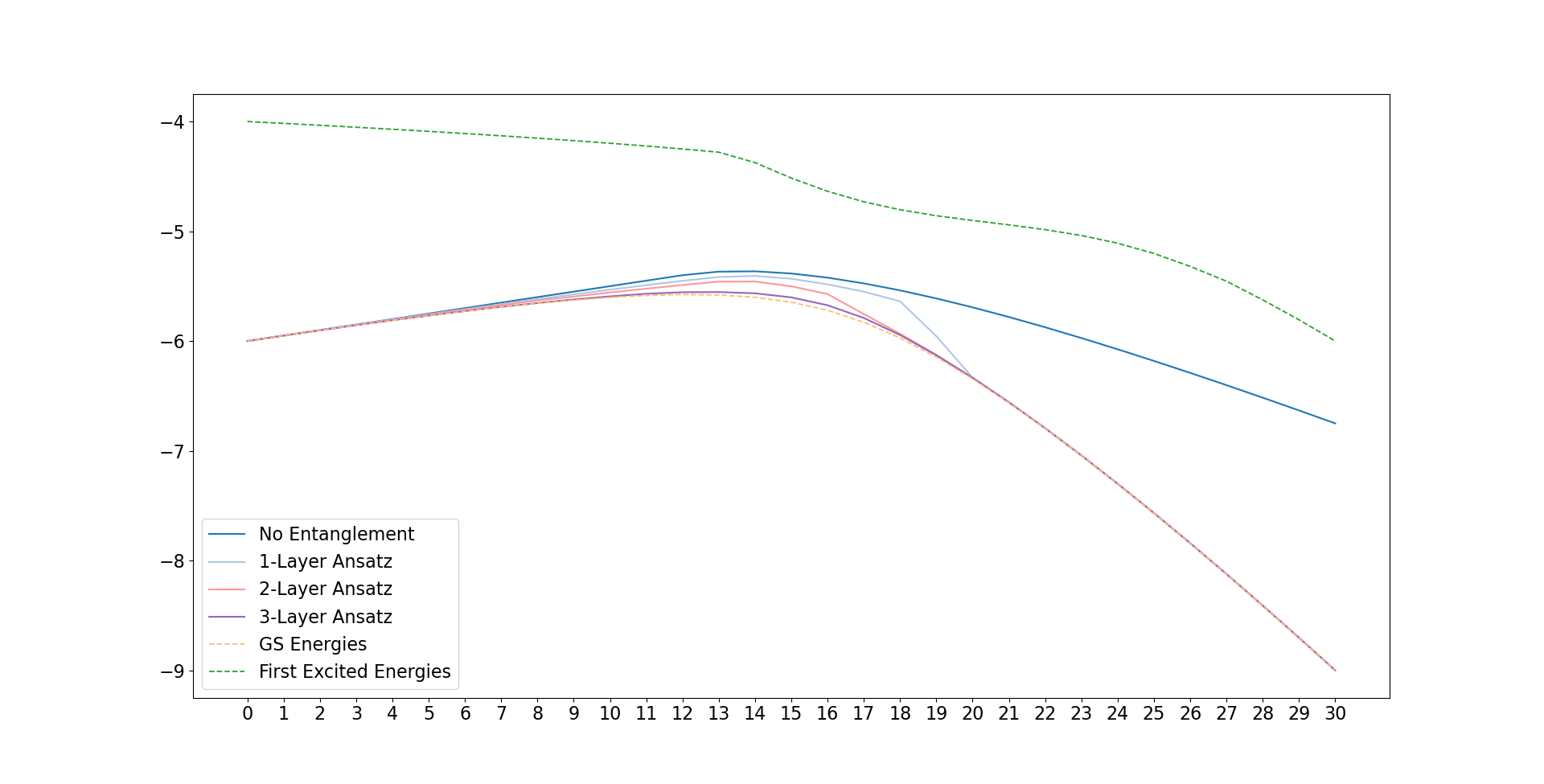}};
\node[below=of img,yshift=0.6cm, node distance=0cm, yshift=1.1cm] {\scriptsize Step};
\node[left=of img, xshift=2cm, node distance=0cm, rotate=90, anchor=center,yshift=-0.7cm] {\scriptsize Expectation Value};
\end{tikzpicture}
    \caption{AQC-PQC performance for three different parameterized family of gates for the MaxCut problem. The 3-Layer Ansatz (purple line) always remains close to the true ground state energy, returning a quantum state with an overlap equal to one at the end of the algorithm. On the other hand, the ansatz family that uses no entanglement (and thus is the least expressive) cannot remain close to the ground state (blue line) and so at the final step, it returns a solution that is far from the optimal.}
    \label{fig:algorithm performance}
\end{figure}

\section{Error Analysis}
\label{appendix:error_analysis}

\noindent For the analysis below (see \cite{folland2005higher} for details), we must first introduce some notation. Consider the multi-index $\boldsymbol{\alpha} = (\alpha_1, \alpha_2, \ldots, \alpha_n)$ where $\alpha_i$ are nonnegative integers. We define:
\begin{equation}
\begin{gathered}
    |\boldsymbol{\alpha}| = \alpha_1 + \alpha_2 + \ldots \alpha_n,\; \;
    \boldsymbol{\alpha}! = \alpha_1!\alpha_2!\ldots \alpha_n!\\
    \boldsymbol{x}^{\boldsymbol{\alpha}} = x_1^{\alpha_1}x_2^{\alpha_2}\ldots x_n^{\alpha_n} \: \text{for $x\in \mathbb{R}^n$},\; \;
    \partial^{\boldsymbol{\alpha}} f = \partial_1^{\alpha_1}\partial_2^{\alpha_2}\ldots \partial_n^{\alpha_n}f
\end{gathered}
\end{equation}

    \begin{theorem} (\emph{Taylor's Theorem for multivariable functions.} \cite{folland2005higher}) Let $f : \mathbb{R}^n\rightarrow \mathbb{R}$ be a $(k+1)$ differentiable function on an open convex set $S$. If $\boldsymbol{a}\in S$ and $\boldsymbol{a}+\boldsymbol{\epsilon}\in S$ then:
\begin{equation}
    f(\boldsymbol{a}+\boldsymbol{\epsilon}) = \sum_{|\boldsymbol{\alpha}|\leq k}\frac{\partial^{\boldsymbol{\alpha}} f(\boldsymbol{a})}{\boldsymbol{\alpha} !}\boldsymbol{\epsilon}^{\boldsymbol{\alpha}} + R_{\boldsymbol{a},k}(\boldsymbol{\epsilon})
\end{equation}
where the \emph{remainder} $R_{\boldsymbol{a},k}(\boldsymbol{\epsilon})$ is given in Lagrange's form by:
\begin{equation}
    R_{\boldsymbol{a},k}(\boldsymbol{\epsilon}) = \sum_{|\boldsymbol{\alpha}| = k+1} \partial^{\boldsymbol{\alpha}} f(\boldsymbol{a} + c\boldsymbol{\epsilon})\frac{\boldsymbol{\epsilon}^{\boldsymbol{\alpha}}}{\boldsymbol{\alpha}!}\: \: \text{for some $c\in \{0,1\}$}
\label{eq:remainder}
\end{equation}
\end{theorem}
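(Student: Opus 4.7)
The plan is to reduce the multivariable statement to the classical single-variable Taylor theorem by restricting $f$ to the line segment joining $\boldsymbol{a}$ and $\boldsymbol{a}+\boldsymbol{\epsilon}$. Since $S$ is open and convex and both endpoints lie in $S$, the segment $\{\boldsymbol{a}+t\boldsymbol{\epsilon} : t\in[0,1]\}$ is contained in $S$, so I can define the auxiliary function $g:[0,1]\to\mathbb{R}$ by $g(t) := f(\boldsymbol{a}+t\boldsymbol{\epsilon})$. The hypothesis that $f$ is $(k+1)$-times differentiable on $S$ guarantees that $g$ is $(k+1)$-times differentiable on an open interval containing $[0,1]$, which puts us in position to invoke the one-dimensional Taylor theorem with Lagrange remainder: there exists $c\in(0,1)$ such that
\begin{equation}
g(1) = \sum_{j=0}^{k} \frac{g^{(j)}(0)}{j!} + \frac{g^{(k+1)}(c)}{(k+1)!}.
\end{equation}

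The next step is to express $g^{(j)}(t)$ in terms of the partial derivatives of $f$. A straightforward induction using the chain rule gives $g'(t)=\sum_{i}\epsilon_i(\partial_i f)(\boldsymbol{a}+t\boldsymbol{\epsilon})$, and iterating this expansion produces
\begin{equation}
g^{(j)}(t) \;=\; \sum_{i_1,\dots,i_j=1}^{n} \epsilon_{i_1}\cdots\epsilon_{i_j}\,(\partial_{i_1}\cdots\partial_{i_j} f)(\boldsymbol{a}+t\boldsymbol{\epsilon}).
\end{equation}
Collecting the terms according to how many times each index appears and applying the multinomial theorem (which contributes the combinatorial factor $j!/\boldsymbol{\alpha}!$ for each multi-index $\boldsymbol{\alpha}$ with $|\boldsymbol{\alpha}|=j$) rewrites this as
\begin{equation}
g^{(j)}(t) \;=\; \sum_{|\boldsymbol{\alpha}|=j} \frac{j!}{\boldsymbol{\alpha}!}\,\boldsymbol{\epsilon}^{\boldsymbol{\alpha}}\,(\partial^{\boldsymbol{\alpha}} f)(\boldsymbol{a}+t\boldsymbol{\epsilon}).
\end{equation}
The Schwarz/Clairaut symmetry of mixed partials, which follows from the assumed differentiability, is what allows the partial derivatives to be indexed by unordered multi-indices rather than ordered tuples.

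Substituting the closed-form expression for $g^{(j)}$ into the single-variable Taylor expansion and noting the cancellation $\frac{1}{j!}\cdot\frac{j!}{\boldsymbol{\alpha}!}=\frac{1}{\boldsymbol{\alpha}!}$ yields
\begin{equation}
f(\boldsymbol{a}+\boldsymbol{\epsilon}) \;=\; \sum_{j=0}^{k}\sum_{|\boldsymbol{\alpha}|=j}\frac{\partial^{\boldsymbol{\alpha}} f(\boldsymbol{a})}{\boldsymbol{\alpha}!}\,\boldsymbol{\epsilon}^{\boldsymbol{\alpha}} \;+\; \sum_{|\boldsymbol{\alpha}|=k+1}\frac{\partial^{\boldsymbol{\alpha}} f(\boldsymbol{a}+c\boldsymbol{\epsilon})}{\boldsymbol{\alpha}!}\,\boldsymbol{\epsilon}^{\boldsymbol{\alpha}},
\end{equation}
which is precisely the claimed formula together with the remainder in Lagrange form \eqref{eq:remainder}. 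The main obstacle in this proof is the bookkeeping of the induction that produces the multinomial coefficients; verifying that the unordered-index regrouping really multiplies each term by the correct factor $j!/\boldsymbol{\alpha}!$ requires careful counting, but once that identity is established everything else reduces to a mechanical substitution into the one-dimensional statement.
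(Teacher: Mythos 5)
Your proof is correct and is exactly the standard argument used in the cited reference \cite{folland2005higher}: restrict $f$ to the segment via $g(t)=f(\boldsymbol{a}+t\boldsymbol{\epsilon})$, apply the one-variable Taylor theorem with Lagrange remainder, and regroup the iterated chain-rule derivatives into multi-indices with the factor $j!/\boldsymbol{\alpha}!$, using symmetry of mixed partials. The paper itself states this theorem only by citation, so there is nothing further to compare; note only that your conclusion $c\in(0,1)$ is the correct form of the condition (the paper's ``$c\in\{0,1\}$'' is evidently a typo).
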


\noindent One very useful property of Taylor's theorem is that it allows an analysis of the error when higher order terms are neglected. If bounds on the derivatives exist, then it is also possible to bound the remainder (i.e. the error in the approximation). Consider a Hamiltonian of the form:
\begin{equation}
    H = \sum_{l=0}^L c_l P_l
\end{equation}
comprised of $L=\mathcal{O}(\textrm{poly}(n))$ Pauli strings with $\norm{P_l} = 1$ and $c_l\in \mathbb{R}$. The expectation value of this Hamiltonian for every state $\ket{\psi}$ can be easily bounded as:
\begin{equation}
\begin{gathered}
|F| = |\bra{\psi}H\ket{\psi}| =
\Big{|}\bra{\psi}\sum_{l=0}^L c_l P_l\ket{\psi}\Big{|} \leq \\ \sum_{l=0}^L |c_l| |\bra{\psi} P_l\ket{\psi}| \leq \sum_{l=0}^L |c_l| \leq L \max_l |c_l|
\label{eq:exp_value_bound}
\end{gathered}
\end{equation}

\noindent In this paper we have considered PQCs that are comprised of parameterized gates that follow the parameter shift rules. However, the analysis can be easily extended (with some minor changes) to cases where the parameter shift rules do not hold. In our case, as we analyzed in \ref{sec:preliminaries} all derivatives can be written as a linear combination of expectation values in different parameter settings. As a result, all high-order derivatives can be bounded. Consider for example, the second order derivatives $\Big{|}\frac{\partial^2 F}{\partial \theta_i \partial \theta_j}\Big{|}$. Using the previous results (Eq. \eqref{eq:exp_value_bound}):
\begin{equation}
    \Big{|}\frac{\partial^2 F}{\partial \theta_i \partial \theta_j}\Big{|} \leq 4L\max_l |c_l|
\end{equation}

\noindent It is known \cite{folland2005higher} that if $f$ is $(k+1)$ times differentiable on a convex set $S$, with bounded derivatives $|\partial^{\boldsymbol{\alpha}} f(\boldsymbol{x})| < M$ and $|\boldsymbol{\alpha}| = k+1$ then:
\begin{equation}
    \Big{|}R_{\boldsymbol{a},k}(\boldsymbol{\epsilon})\Big{|} \leq \frac{M}{(k+1)!}\norm{\boldsymbol{\epsilon}}^{k+1}
\end{equation}
To see this, we have:
\begin{equation*}
    \Big{|}R_{\boldsymbol{a},k}\Big{|} = \sum_{|\boldsymbol{\alpha}|=k+1}\Big{|}\partial^{\boldsymbol{\alpha}} f\frac{\boldsymbol{\epsilon}^{\boldsymbol{\alpha}}}{\boldsymbol{\alpha} !}\Big{|}\leq M \sum_{|\boldsymbol{\alpha}|=k+1} \frac{\boldsymbol{|\epsilon}^{\boldsymbol{\alpha}}|}{\boldsymbol{\alpha} !} = \frac{M}{(k+1)!}\norm{\boldsymbol{\epsilon}}^{k+1}
\end{equation*}
where we used the \emph{multinomial theorem}:
\begin{equation*}
    (\epsilon_1 + \epsilon_2 + \ldots + \epsilon_n)^k = \sum_{|\boldsymbol{\alpha}| = k} \frac{k!}{\boldsymbol{\alpha}!}\boldsymbol{\epsilon}^{\boldsymbol{\alpha}}
\end{equation*}

\noindent We are interested in the case that we analyzed in Section \eqref{sec:param_theory}. We start with an initial Hamiltonian $H_0$:
\begin{equation}
    H_0 = \sum_{l=0}^{L_0} c_l P_l
\end{equation}
and introduce a small perturbation of the form $\lambda H_1$ with:
\begin{equation}
    H_1 = \sum_{l=0}^{L_1} b_l P_l
\end{equation}
We are interested in the error of the \emph{affine approximation} of the expectation value $F_\lambda$ of the perturbed Hamiltonian $H_\lambda = H_0 + \lambda H_1$. Using the notation that we introduced above, we can write $F_\lambda$ at the point $\boldsymbol{\theta^*} + \boldsymbol{\epsilon}$ as:
\begin{equation}
    F_\lambda (\boldsymbol{\theta^*} + \boldsymbol{\epsilon}) = F_\lambda(\boldsymbol{\theta^*}) + \sum_{i=1}^M \epsilon_i\frac{\partial }{\partial \theta_i} F_\lambda(\ve{\theta^*}) + R_{\boldsymbol{\theta^*}, 1}(\boldsymbol{\epsilon})
\label{eq:Taylor_exp_perturbed}
\end{equation}

\begin{corollary}
The remainder of the Taylor expansion of Eq. \eqref{eq:Taylor_exp_perturbed} when terms of order $k\geq 2$ are neglected is upper bounded as:
\begin{equation}
    |R_{\boldsymbol{\theta^*},1}(\boldsymbol{\epsilon})| \leq 2 \norm{\boldsymbol{\epsilon}}^2(L_0\max_j |c_j| + \lambda L_1 \max_l |b_l|)
\end{equation}
\end{corollary}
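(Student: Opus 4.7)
The plan is to combine the general Taylor remainder bound stated in the appendix with a Pauli-decomposition bound on the second partial derivatives of $F_\lambda$. Specifically, I will apply Taylor's theorem with $k=1$ to $F_\lambda$ at the expansion point $\boldsymbol{\theta^*}$ and, since the terms kept in Eq.~\eqref{eq:Taylor_exp_perturbed} are precisely those of total multi-index order $|\boldsymbol{\alpha}|\leq 1$, the neglected contribution equals $R_{\boldsymbol{\theta^*},1}(\boldsymbol{\epsilon})$, whose Lagrange form involves only partial derivatives of total order $2$.

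First I would recall the general estimate proven immediately above the corollary: if $|\partial^{\boldsymbol{\alpha}} F_\lambda(\boldsymbol{x})|\leq M$ uniformly on the relevant convex set for all multi-indices with $|\boldsymbol{\alpha}|=k+1$, then $|R_{\boldsymbol{\theta^*},k}(\boldsymbol{\epsilon})|\leq \frac{M}{(k+1)!}\|\boldsymbol{\epsilon}\|^{k+1}$. Specializing to $k=1$ reduces the task to producing a uniform bound $M$ on all second partial derivatives of $F_\lambda$.

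Next I would obtain such an $M$. Linearity of the expectation gives $F_\lambda=F_0+\lambda F_1$, where $F_0,F_1$ are the expectation values of $H_0$ and $H_1$ respectively, so
\begin{equation*}
\Big|\tfrac{\partial^2 F_\lambda}{\partial\theta_i\partial\theta_j}\Big|\;\leq\;\Big|\tfrac{\partial^2 F_0}{\partial\theta_i\partial\theta_j}\Big|+\lambda\Big|\tfrac{\partial^2 F_1}{\partial\theta_i\partial\theta_j}\Big|.
\end{equation*}
Each term on the right is bounded using the Pauli-string argument already established in the appendix: writing the second derivative as a linear combination of four expectation values via the parameter-shift (or finite-difference) rule and then applying Eq.~\eqref{eq:exp_value_bound} to each of those expectation values yields $|\partial_i\partial_j F_0|\leq 4L_0\max_j|c_j|$ and $|\partial_i\partial_j F_1|\leq 4L_1\max_l|b_l|$. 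Combining gives $M=4\bigl(L_0\max_j|c_j|+\lambda L_1\max_l|b_l|\bigr)$.

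Finally I would substitute this $M$ into the remainder estimate with $k=1$, so that $(k+1)!=2$, producing
\begin{equation*}
|R_{\boldsymbol{\theta^*},1}(\boldsymbol{\epsilon})|\leq \tfrac{M}{2}\|\boldsymbol{\epsilon}\|^{2}=2\|\boldsymbol{\epsilon}\|^{2}\bigl(L_0\max_j|c_j|+\lambda L_1\max_l|b_l|\bigr),
\end{equation*}
which is the claimed inequality. The only subtle point — and the one I would treat most carefully — is justifying that the uniform bound on the second derivatives holds not only at $\boldsymbol{\theta^*}$ but throughout the line segment joining $\boldsymbol{\theta^*}$ and $\boldsymbol{\theta^*}+\boldsymbol{\epsilon}$ (as required by the Lagrange remainder). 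This is automatic here because the Pauli-based bound is parameter-independent: it uses $|\langle\psi(\boldsymbol{\theta})|P_l|\psi(\boldsymbol{\theta})\rangle|\leq 1$ at every point, so $M$ is a genuine global constant, and the estimate extends without change to all interior points of the segment.
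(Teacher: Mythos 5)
Your proposal is correct and follows essentially the same route the paper intends by its ``follows immediately from the analysis above'': you specialize the general Lagrange-remainder bound to $k=1$, bound the second derivatives of $F_\lambda = F_0 + \lambda F_1$ by $4\bigl(L_0\max_j|c_j| + \lambda L_1\max_l|b_l|\bigr)$ via the parameter-shift/Pauli-string estimate, and divide by $(k+1)! = 2$. Your explicit remark that the derivative bound is parameter-independent and hence valid along the whole segment is a nice touch that the paper leaves implicit.
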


\noindent \emph{Proof.} The proof follows immediately from the analysis above.\\

\noindent As an immediate result, one can identify the convex set 
of feasible solutions $\boldsymbol{\epsilon}$ by allowing a desired tolerance $\tilde{\epsilon}$ in Eq. \eqref{eq:Taylor_exp_perturbed}. For example, if we want an error up to $\tilde{\epsilon} = 10^{-5}$ in Eq. \eqref{eq:Taylor_exp_perturbed} then:
\begin{equation*}
\begin{gathered}
        |R_{\boldsymbol{\theta^*},1}(\boldsymbol{\epsilon})| \leq 2 \norm{\boldsymbol{\epsilon}}^2(L_0\max_j |c_j| + \lambda L_1 \max_l |b_l|) \leq 10^{-5}\\
        \implies \norm{\boldsymbol{\epsilon}}^2 \leq \frac{10^{-5}}{2(L_0\max_j |c_j| + \lambda L_1 \max_l |b_l|)}
\end{gathered}
\end{equation*}

\end{document}